\newcommand\restartchapters{\par
  \setcounter{chapter}{0}%
  \setcounter{section}{0}%
  \gdef\@chapapp{\chaptername}%
  \gdef\thechapter{\@arabic\c@chapter}}
\def\bq{{\bf q}}
\def\bw{{\bf w}}
\def\bq{{\bf q}}
\newtheorem{theorem}{\bf Theorem}
\newtheorem{lemma}{\it \underline{Lemma}}
\newtheorem{proposition}{\it \underline{Proposition}}
\renewcommand{\algorithmicrequire}{\textbf{Input:}}
\newcommand \ile {\stackrel{\mathclap{\normalfont\mbox{i}}}{\le}}
\newcommand \iieq {\stackrel{\mathclap{\normalfont\mbox{i2}}}{=}}
\newcommand \iiile {\stackrel{\mathclap{\normalfont\mbox{i3}}}{\le}}
\newcommand \iiiile {\stackrel{\mathclap{\normalfont\mbox{i4}}}{\le}}
\newcommand \ifiveeq {\stackrel{\mathclap{\normalfont\mbox{i5}}}{=}}
\newcommand \isixle {\stackrel{\mathclap{\normalfont\mbox{i6}}}{\le}}
\newcommand \isevenle {\stackrel{\mathclap{\normalfont\mbox{i7}}}{\le}}
\g@addto@macro\normalsize{%
 \setlength\abovedisplayskip{4pt}
 \setlength\belowdisplayskip{4pt}
 \setlength\abovedisplayshortskip{4pt}
 \setlength\belowdisplayshortskip{4pt}
}
\def\endthebibliography{%
	\def\@noitemerr{\@latex@warning{Empty `thebibliography' environment}}%
	\endlist
}
\begin{document}
\bstctlcite{IEEEexample:BSTcontrol}
\title{\LARGE Throughput Maximization for Wireless Communication systems with Backscatter- and Cache-assisted UAV Technology}
\author{ \normalsize
\IEEEauthorblockN{$\text{Dinh-Hieu Tran}, \textit{Student Member, IEEE}$, $\text{Sumit Gautam}, \textit{Member, IEEE}$, $\text{Symeon Chatzinotas}, \textit{Senior Member, IEEE},$ $\text{and Bj{\"o}rn Ottersten}, \textit{Fellow, IEEE}$ }
%\thanks{The research leading to these results has received funding from the Luxembourg National Research Fund under project FNR CORE ProCAST, grant C17/IS/11691338 and FNR 5G-Sky, grant C19/IS/13713801.}
%\thanks{Dinh-Hieu Tran, Symeon Chatzinotas, and Bj{\"o}rn Ottersten are with the $\&$ Interdisciplinary Centre for Security, Reliability and Trust (SnT), the University of Luxembourg, Luxembourg. (e-mail: \{hieu.tran-dinh, sumit.gautam,  symeon.chatzinotas, bjorn.ottersten \} @uni.lu).}
%\thanks{Corresponding author: Dinh-Hieu~Tran (e-mail: hieu.tran-dinh@uni.lu).}
\thanks{}
\thanks{}
\vspace*{-0.5cm}}
\maketitle
\thispagestyle{empty}
\pagestyle{empty}
\vspace*{-1cm}
\begin{abstract}
Unmanned aerial vehicle (UAV) has been widely adopted in wireless systems due to its flexibility, mobility, and agility. Nevertheless, a limited onboard battery greatly hinders UAV to prolong the serving time from communication tasks that need a high power consumption in active RF communications. Fortunately, caching and backscatter communication (BackCom) are appealing technology for energy-efficient communication systems. This motivates us to investigate a wireless communication network with backscatter- and cache-assisted UAV technology. We assume a UAV with a cache memory is deployed as a flying backscatter device (BD), term the UAV-enabled BD (UB), to relay the source's signals to the destination. Besides, the UAV can harvest energy from the source's RF signals and then utilizes it for backscattering information to the destination. In this context, we aim to maximize the total throughput by jointly optimizing the dynamic time splitting (DTS) ratio, backscatter coefficient, and the UB's trajectory with caching capability at the UB corresponding to linear energy harvesting (LEH) and non-linear energy harvesting (NLEH) models. These formulations are troublesome to directly solve since they are mixed-integer non-convex problems. To find solutions, we decompose the original problem into three sub-problems, whereas we first optimize the DTS ratio for a given backscatter coefficient and UB's trajectory, followed by the backscatter coefficient optimization for a given DTS ratio and UB's trajectory, and the UB's trajectory is finally optimized for a given DTS ratio and backscatter coefficient. By using the KKT conditions, closed-form expressions for the optimal values of the DTS ratio and backscatter coefficient are obtained which greatly reduce the computation time. Moreover, the solution of the third sub-problem can be acquired by adopting the successive convex approximation (SCA) technique. Consequently, efficient alternating algorithms are proposed for both EH models by leveraging the block coordinate descent (BCD) method. Finally, the intensive numerical results demonstrate that our proposed schemes achieve significant throughput gain in comparison to the benchmark schemes.
\end{abstract}

%\vspace*{-0.2cm}
\begin{IEEEkeywords}
\vspace*{-0.2cm} Backscatter communication (BackCom), Caching, energy harvesting, reflection coefficient control, time allocation, trajectory design, unmanned aerial vehicle (UAV).
\end{IEEEkeywords}

\vspace*{-0.5cm}
\section{Introduction} \label{Introduction}
% Title: Cache-Aided backscatter Scheme in Wireless-Powered UAV Communication Networks
% Joint Trajectory and Dynamic Time Splitting Design for Cache-aided and Wireless Powered UAV-enabled backscatter Communications
% Throughput Maximization for Cache-aided UAV-enabled backscatter Communications
% Caching UAV-enabled backscatter Communications: Joint Trajectory and Dynamic Time Spitting Design 

Unmanned aerial vehicles (UAVs) have attracted significant attention from both academia and industry due to their flexible deployment, low cost, and high maneuverability \cite{Mozaffari2019Survey,Yeng2019Sur}. Indeed, UAVs have enabled various applications such as military, agriculture, transportation, search and rescue missions, surveillance and monitoring, telecommunications \cite{Mozaffari2019Survey,Yeng2019Sur,Hieu,Tran2020FDUAV,Y_Zeng_1,Mozaffari,hua2019}. Particularly, if properly designed and deployed, UAVs can provide efficient solutions for wireless communication networks. Specifically, UAVs can be utilized as aerial/flying base stations (BSs) to support terrestrial BSs that are located in fixed locations and cannot be shifted elsewhere. Especially, in a natural disaster where terrestrial BSs are damaged or isolated, portable BSs do exist but they have to be moved using ground vehicles which is problematic when infrastructures for publication transportation systems may be destroyed. Consequently, UAVs can be swiftly deployed to disseminate vital information to people or help them to communicate with authorities as soon as possible \cite{Hieu,Tran2020FDUAV}. For industrial applications, Google Project Wing and Amazon Prime Air have built and tested drones deliveries that could be used after a disaster (i.e., flood, earthquake) or in extreme weather conditions \cite{stewart2014google}. They expect to develop an advanced delivery system where drones help to bring medications or foods to people in the areas that conventional vehicles can not reach. Besides, Facebook Halts Aquila and Google Loon projects aim at beaming internet access to people around the world who can not connect to the Internet by using drones/balloons \cite{kelion2015facebook}. Furthermore, AT$\&$T and Qualcomm are planning to adopt UAVs for facilitating large-scale wireless communications in 5G networks \cite{LTE_Drone}. %, whereas 5G require ultra-high reliable, ultra-low latency, and the connectivity between users can not fail \cite{LTE_Drone}.

Recently, UAVs have been proposed as relays to improve the connectivity of networks \cite{ZhongUAVRelay,Li2020UAV,Sharma2020UAVSat,Sun2019UAVsecure,Tran2020FDUAV,Ye2020UAV}. Especially, in case direct communications links are missing due to shadowing or un-communication devices by the BSs during peak hours. In these cases, UAVs are deployed as relays to help convey information from the source to the destination. In \cite{ZhongUAVRelay}, the authors studied UAVs-assisted self-organized device-to-device (D2D) networks. Specifically, they aimed to maximize the total throughput via jointly optimizing the channel allocation, relay deployment, and relay assignment. Li et al. \cite{Li2020UAV} investigated the joint positioning and power control to maximize the sum rate of UAV relay networks, wherein the UAV utilized two-way communications between the BS and a set of users. The works in \cite{Sharma2020UAVSat} and \cite{Sun2019UAVsecure} investigated the secure transmission in UAV relay networks. Sharma et al. in \cite{Sharma2020UAVSat} proposed a novel secure 3D UAV relaying for hybrid satellite-terrestrial networks (HSTNs) in the presence of a flying eavesdropper and then they investigate secrecy outage probability and the probability of non-zero secrecy capacity. Sun et al. \cite{Sun2019UAVsecure} studied secure transmissions of
millimeter-wave simultaneous wireless information and power
transfer (SWIPT) UAV relay networks with multiple eavesdroppers. In contract to \cite{ZhongUAVRelay,Li2020UAV,Sharma2020UAVSat,Sun2019UAVsecure} that only considered half-duplex (HD), \cite{Tran2020FDUAV} and \cite{Ye2020UAV} investigated the rotary-wing UAV-enabled FD Internet-of-Things (IoT) networks.

Notably, the UAVs in the above works emit active RF signals to the destination requiring high energy consumption which reduces the lifetime of UAVs with a limited onboard battery. In this regard, backscatter communication (BackCom) is a promising solution since a typical backscatter circuit's power consumption is usually in the order of $\mu W$ \cite{Lu2018Back,Huynh2018Back}, which is significantly lower than that of active RF transmission, i.e., in Watts. Consequently, BackCom has recently emerged as a key concern for UAV communication networks \cite{yang2019energy,farajzadeh2019uav,Farajzadeh2020Back,Hu2020Back,hua2019}. In \cite{hua2019}, the authors proposed two novel schemes termed the transmit-backscatter protocol and transmit-backscatter relay protocol corresponding to the presence or absence of a direct link between backscatter user and receiver in UAV-aided BackCom networks. Yang et al. \cite{yang2019energy} considered a UAV-aided BackCom network comprising of backscatter devices (BDs) and carrier emitters (CEs) that are randomly distributed on the ground. They aimed at maximum energy efficiency (EE) by jointly optimizing the BDs' scheduling, the UAV's trajectory, and the CEs' transmit power. Farajzadeh et al. \cite{farajzadeh2019uav} proposed a novel UAV data collection in NOMA BackCom networks, where the UAV acted both as a power source and a data collector. The objective was to jointly design several backscatter devices, UAV’s altitude, and backscatter coefficient to maximize the total successfully decoded bits while minimizing the UAV's flight time. The same authors in \cite{Farajzadeh2020Back} studied the first work that considered UAV as an enabler to improve over-the-air computation (AirComp)'s performance. Hu et al. \cite{Hu2020Back} proposed the first work that investigated secure transmissions in UAV-aided BackCom networks. Despite prominent achievements in UAV-assisted BackCom networks in \cite{yang2019energy,farajzadeh2019uav,Farajzadeh2020Back,Hu2020Back,hua2019}, aforementioned works do not take caching into consideration.

Recent works have shown that some popular files are repeatedly demanded by users, which accounts for a massive portion of data traffic \cite{Erman2011Caching,Saja2020Caching}. By storing a part of popular content in the cache of edge nodes, wireless caching is a promising method to reduce traffic load, especially during peak hours \cite{Srikan2020Caching}. Some recent works such as in \cite{Xu2018Caching,Cheng2020UAVCache,Chen2017UAVCache,Chai2020UAVCache,Wu2020UAVcache} have been recently devoted to cache-assisted UAV communications. Xu et al. \cite{Xu2018Caching} proposed a novel scheme to overcome the endurance issue at the UAV by utilizing proactive caching. Specifically, they aimed at minimizing the weighted sum of the file caching cost and the retrieval cost by jointly optimizing the UAV communication scheduling, UAV trajectory, and file caching policy. Cheng et al. \cite{Cheng2020UAVCache} proposed a novel scheme to assure the secure transmission for UAV relay networks with caching capability. The learning-based approaches in cache-enabled UAV communications were investigated in \cite{Chen2017UAVCache,Chai2020UAVCache,Wu2020UAVcache}. Chen et al. \cite{Chen2017UAVCache} proposed the first work to analyze the utilization of caching in UAV communications based on conceptor-based echo state networks (ESNs). Different from existing works that focused on finite-time horizon offline trajectory design, Chai et al. \cite{Chai2020UAVCache} proposed an online trajectory and resource allocation optimization for cache-enabled UAV wireless communications. Wu et al. \cite{Wu2020UAVcache} adopted a convolutional neural network (CNN)-based deep supervised learning scheme for pushing up the decision-making speed in the highly dynamic vehicular networks.

From the above discussions and the fact that caching and BackCom are energy-efficient communication technologies for UAV communication networks, this paper investigates a caching UAV-enabled BackCom network, in which a UAV can store a part of popular contents in its cache. Besides, the UAV is equipped with a backscatter circuit that can harvest the RF signal from the source and then use this energy for backscattering to the destination. Since backscatter devices consume a relatively low amount of power  (i.e., in $\mu W$), which is well-fitting with RF-power energy, i.e., up to 10 $\mu W$ \cite{lu2014wireless}. In contrast to the above works in \cite{yang2019energy,farajzadeh2019uav,Farajzadeh2020Back,Hu2020Back,hua2019} that only consider UAV as a transmitter/receiver, this work considers UAV as an aerial BD which harvests energy from the source's RF signal and then utilizes this energy for backscattering signal to the destination. \textit{To our best knowledge, this is the first work that jointly considers the combination of caching and UAV in BackCom networks.} In summary, our  contributions are as follows:
%Particularly, almost conventional works about wireless-powered UAV communications only consider the linear energy harvesting (LEH) model for ease of analysis. However, the LEH model only works well in the case of the low input power of the harvesting circuit. This motivates us to consider both LEH and non-linear EH (NLEH) models to give a full picture of the advantages/disadvantages of each scheme. Moreover, Because the UB flies from initial to final locations, it can not hover over the source all the time. Thus, the backscatter coefficient and DTS ratio should be carefully designed to suitable with UB's position at each time slot $n$. Particularly, this exists a trade-off for the DTS ratio in each time slot which directly impacts the amount of harvested energy and backscatter rate. % Motivated by the above considerations, our goal is to maximize the total throughput at the destination via jointly optimizing the backscatter coefficient, dynamic time splitting (DTS) ratio, and UAV-enabled backscatter (UB) device's trajectory over a finite traveling time of the UB, subject to UB's mobility.
%However, the LEH model only works well in the case of the low input power of the harvesting circuit \cite{boshkovska2015practical}.
\begin{itemize}
	\item We propose a novel backscatter- and cache-assisted UAV communication network for LEH and NLEH models. Caching and backscatter can reduce the transmit power of the UAV and thus overcome the sustainability issue at the UAV. This is the first work that jointly considers UAV, caching, and BackCom.
	\item Most conventional works on wireless-powered UAV communications only consider the linear energy harvesting (LEH) model for ease of analysis. However, the LEH model only works well in the case of the low input power of the harvesting circuit. This motivates us to consider both LEH and non-linear EH (NLEH) models to give a full picture of the advantages/disadvantages of each scheme. Moreover, because the UB flies from initial to final locations, it cannot hover over the source all the time. Thus, the backscatter coefficient and DTS ratio should be carefully designed to suitable with UB's position at each time slot $n$. Particularly, this exists a trade-off for the DTS ratio in each time slot which directly impacts the amount of harvested energy and backscatter rate. 
	\item Motivated by the above considerations, we formulate an optimization problem to maximize the total collected throughput at the destination, subject to constraints on the limited flying time, UB's maximum speed, UB's trajectory, DTS ratio in each time slot, and maximum value of backscatter coefficient for both LEH and NLEH models. These formulations are mixed-integer non-convex problems and challenges to be solved.
	\item We decompose the problem into three sub-problems, wherein we first optimize the DTS ratio for a given UB's trajectory and backscatter coefficient, followed by the backscatter coefficient optimization for a given DTS ratio and trajectory, and lastly, we optimize the trajectory for a given DTS ratio and backscatter coefficient. Particularly, the closed-form expressions for the DTS ratio and backscatter coefficient are derived which dramatically
	reduces the computation time. The trajectory optimization sub-problem can be solved by leveraging the successive convex approximation (SCA) technique. Based on the solutions of these three sub-problems, we propose three-layer alternating algorithms to solve formulated problems adopting the block coordinate descent (BCD) method. 
	\item The effectiveness of the proposed schemes is demonstrated via numerical results, which show significant enhancements concerning the total collected throughput at the destination in comparison to the benchmark schemes. Specifically, the benchmark schemes are designed similar to that of our proposed algorithms but without caching capability or with a fixed DTS ratio or with a fixed trajectory for both LEH and NLEH models.
 \end{itemize}

The rest of the paper is organized as follows. The system model and problem formulation are given in Section~\ref{System_Model}. The proposed iterative algorithm for solving linear EH model-based UAV-enabled BackCom is presented in Section~\ref{sec:3}. While, Section~\ref{sec:4} treats the non-linear EH model. Numerical results are illustrated in Section~\ref{Sec:Num}, and Section~\ref{Sec:Con} concludes the paper.

\emph{Notation}: Scalars and vectors are denoted by lower-case letters and boldface lower-case letters, respectively. For a set $\mathcal{{K}}$, $|\mathcal{{K}}|$ denotes its cardinality. For a vector $\bf v$, $\left\| \bf v \right\|_1$ and $\left\| \bf v \right\|$  denote its $\ell_1$ and Euclidean ($\ell_2$) norm, respectively. $\mathbb{R}$ represents for the real matrix. $\mathbb{R}^+$ denotes the non-negative real numbers, i.e., $\mathbb{R}^+=\{x \in \mathbb{R}|x \ge 0\}$. $x \sim {\cal{CN}}(0,\sigma^2)$ represents circularly symmetric complex Gaussian random variable with zero mean and variance $\sigma^2$. Finally, $\mathbb{E} [x]$ denotes the expected value of $x$.

\begin{figure}[t]
\centering
\includegraphics[width=8cm,height=6cm]{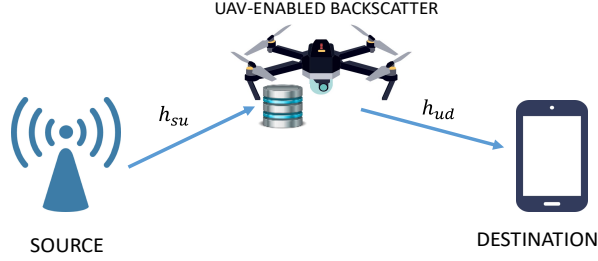}
\vspace{-1.5cm}
\caption { System model: The cache-aided UAV acts as a backscatter to convey the data from a source to a destination, , wherein the UB is quipped with a energy harvester which harvests energy from the transmit RF signal. }
\label{fig:1}   
\end{figure}
 
\vspace{-0.2cm}
\section{System Model and  Problem Formulation} \label{System_Model}
%Particularly, source can transmit in both terrestrial and non-terrestrial communications by using two separate antennas as in Fig. \ref{fig:1}. Particularly, the antenna used for non-terrestrial communication can transmit with high power to charge the UAV without harmful effects on human health.
We consider a cache-assisted UAV-enabled BackCom network, where a UAV is equipped with a backscatter circuit, namely UAV-enabled backscatter device (UB), to assist the source to transmit data to the destination as shown in Fig. \ref{fig:1}. Herein, we assume that the direct transmission link from the source to the destination is impossible due to a heavy obstacle or severe fading. In this work, we focus on communication links between the source to the UB and from the UB to the destination with an assumption that all other users are successfully served by the source through terrestrial communication. Notably, non-terrestrial communication is recognized as a key component to provide cost-effective and high-capacity connectivity in future 5G and beyond/6G wireless networks \cite{giordani2019non}. The flight altitude of UAV is assumed to be fixed at $H$ meter. We assume the total flying time of UB is $T$. To make the problem tractable, the time period $T$ is equally divided into $N$ time slots of $\delta_t=T/N$. Consequently, the location of the UAV at time slot $n$ is ${\bq}_n$, with $n \in {\cal N}=\{0,\dots,N\}$. Moreover, the locations of the source and the destination are assumed to be fixed at ${\bw}_s$ and ${\bw}_d$, respectively.
%The later scenario is one of the crucial element that need to be solve in 5G wireless communications \cite{Osseiran,Zeng}. 

\subsection{Ground-to-Air Channel Model}
By denoting $V_{max}$ as a maximum speed of the UB, the UB's constraints can be represented as
\begin{align}
\label{eq:1}
\left\| \bq_{n+1} - \bq_{n} \right\| \le \delta_d = V_{\rm max}{\delta _t},n=0,\dots,N-1.
\end{align}
\begin{align}
	\label{eq:2}
	\bq_0 = \bq_{\rm I}, \bq_N=\bq_{\rm F},
\end{align}
where $\bq_{\rm I}$ and $\bq_{\rm F}$ is the initial and final location of the UB.

For analytical convenience, let us denote the source, destination, and UB by $s$, $d$, and $u$, respectively. Consequently, the distance from $s \to u$ or $u \to d$ at time slot $n$ is given as
\begin{align}
\label{eq:3}
{d^n_{iu}} = \sqrt {{H^2} + {{\left\| {\bq_n - {\bf w}_i} \right\|}^2}} , i \in \{s, d\}, \; \forall n, 
\end{align}
where ${\bf w}_s$ and ${\bf w}_d$ are fixed locations of the source and destination.

This work considers a realistic channel model consisting of both line-of-sight (LOS) and non-line-of-sight
(NLOS) channel. This is because the UB can operate in different environments, e.g., urban, sub-urban, or rural area. Particularly, we take large-scale fading and small-scale fading into consideration \cite{Hieu,Yaxiong}. Concretely, the channel coefficient $h_{iu}^n$ at time slot $n$ is given as
\begin{align}
\label{eq:4}
h_{iu}^n = \sqrt {\psi_{iu}^n} {\tilde h}_{iu}^n,
\end{align}
where $\psi_{iu}^n$ and ${\tilde h}_{iu}^n$ denotes the large-scale fading and small-scale fading during time slot $n$, respectively. Specifically, $\psi_{iu}[n]$ can be written as
\begin{align}
\label{eq:5}
\psi_{iu}^n = \omega_0 (d_{iu}^n)^{-\alpha},
\end{align}
where $\omega_0$ represents the reference channel gain at $d_{iu} = 1$ meter, and $\alpha$ denotes the path loss exponent. The small-scale fading ${\tilde h}_{iu}^n$ with $\mathbb{E} \left[ | {\tilde h}_{iu}^n  |^2 \right]=1$, can be modeled as
\begin{align}
\label{eq:6}
{\tilde h}_{iu}^n = \sqrt{\frac{K}{1+K}} \bar{h}_{iu}^n + \sqrt{\frac{1}{1+K}} \hat{h}_{iu}^n,
\end{align}
where $\bar{h}_{iu}^n$ accounts for deterministic LoS, $\hat{h}_{iu}^n$ denotes the NLoS component, and $K$ is the Rician factor.

%We assume that the channel between source and destination is modeled as Rayleigh fading, then the channel power gain can be represented as \cite{lyu2018}
%\begin{align}
%\label{eq:7}
%|h_{sd}^n|^2 = \omega_0 (d_{sd})^{-\alpha} \zeta,
%\end{align}
%where $d_{sd}$ is the Euclidean distance between source and destination and $\zeta$ can be modeled as exponential distribution with unit mean. 

\subsection{Energy Harvesting and Energy Consumption Constraints} 
Due to the limited energy storage at the UB, EH becomes a promising
solution in prolonging the lifetime of the UB. We design a dynamic time-splitting mechanism wherein the UB communication can be divided into two dynamic phases within a time slot. Specifically, a fraction $\tau_n$ and $(1-\tau_n)$ of duration $\delta_t$ are used for backscattering signal and EH at the UB, respectively. In the second phase of $(1-\tau_n)\delta_t$, the harvested energy expression at the UB at time slot $n$ is given by \cite{lyu2017optimal}.
%\begin{align}
%\label{eq:81}
%E_h^n = \mu (1-\tau_n) \delta_t P_s \mathbb{E}[h^n_{su}],
%\end{align}
\begin{subnumcases} {\label{eq:7} E_h^n =}
E_{\rm L}^n \triangleq \mu (1-\tau_n) \delta_t P_s \mathbb{E}[h^n_{su}], 
\label{eq:7a} \\
E_{\rm NL}^n \triangleq (1-\tau_n)\delta_t \frac{\Xi}{1-\phi} \big(\frac{1}{1+e^{-\beta P_s \mathbb{E}[h^n_{su}]+\beta\nu}} -\phi \big) \label{eq:7b}
\end{subnumcases} 
where $P_s$ is the transmit power at the source, $E_{\rm L}^n$ and $E_{\rm NL}^n$ are the harvested energy corresponding to linear and sigmoidal function based non-linear models, respectively, $\mu$ denotes the energy harvesting efficiency corresponding to the LEH model and $\tau_n$ represents the DTS ratio at time slot $n$. More specifically, $\tau_n=1$ means that all the signal is backscattered to the receiver during time slot $n$  and $\tau_n=0$ indicates that all the signal is used for EH. Furthermore, $\phi$ is defined as $\phi \triangleq \frac{1}{1+e^{\beta \nu}}$, wherein $\beta$ and $\nu$ are constant values with regards to circuit specifications such as diode turn-on voltage, capacitance, and resistance. $\Xi$ is the maximum harvested power at the UB when the EH circuit is saturated \cite{Sumit2020Backcom}.

The energy consumption due to BackCom during time slot $n$ is represented as $\tau_n \delta_t P_c$, where $P_c$ is the circuit power of the UB during backscatter period  \cite{lyu2017optimal}. We then have the following energy constraint
\begin{align}
\label{eq:8}
\sum\limits_{i=1}^n  \tau_n \delta_t P_c \le \sum\limits_{i=1}^n  E_h^n ,
\end{align}
The constraint \eqref{eq:8} guarantees that the total UAV's energy consumption should be less than or equal to the summation of harvested energy of the UB until time slot $n \in {\cal N}$.

By substituting \eqref{eq:7} into \eqref{eq:8}, we have
\begin{subnumcases} {\label{eq:9} \sum\limits_{n \in {\cal N}} \tau_n \delta_t P_c \le} \sum\limits_{n \in {\cal N}}  \frac{\mu (1-\tau_n) \delta_t \omega_0 P_s} {\big(H^2+{{\left\| {{\bq}_{n} - {\bw}_{s}} \right\|}^2}\big)^{\alpha/2}}, 
\label{eq:9a} \\
\sum\limits_{n \in {\cal N}} (1-\tau_n)\delta_t \frac{\Xi}{1-\phi} \Bigg(\frac{1}{1+e^{\frac{-\beta P_s\omega_0}{(H^2+{{\left\| {{\bq}_{n} -{\bw}_{s}} \right\|}^2})^{\alpha/2}}+\beta\nu} } - \phi \Bigg), \label{eq:9b}
\end{subnumcases}
where $\mathbb{E}[h^n_{su}] = \frac{\omega_0}{\big(H^2+{{\left\| {\bq}_n -{\bw}_s \right\|}^2}\big)^{\alpha/2}}$.
\subsection{Caching Model}
We consider a general caching model at the UB, whereas the UB needs to retrieve the information from its cache to serve the destination. Specifically, the UB is able to store $0 \le \sigma \le 1$ parts of each file in its cache \cite{gautam2018cache} \footnote{This caching method is also known as probabilistic caching.}. Henceforth, $\sigma$ is considered as the caching coefficient in this paper. When the destination requests a file, a part $\sigma$ of this file is already stored in the UB's storage. Therefore, the source only needs to send the remainder of the required file to the UB before its transmission to the destination via backscatter. Moreover, the caching scheme adopted in this work can be considered as a lower bound method in comparison with the case when UB knew the content popularity.

\vspace{-0.2cm}
\subsection{UAV-enabled backscatter (UB) }
In this work, we consider a UB as a flying backscatter device to reflect the signal from source to destination. To avoid the co-channel interference on the uplink (UL) and downlink (DL), time-division duplexing (TDD) is utilized in this system \cite{hua2019}. Specifically, we consider DTS method to divide each time slot into two parts. In this context, $(1-\tau_n)\delta_t$ and $\tau_n\delta_t$ are the fraction of time for data transmission on the UL from $s \to u$ and the DL from $u \to d$, respectively, where $0 \le \tau_n \le 1$ denotes the DTS ratio at the time slot $n$.

Let us denote the symbol transmitted from the source during time slot $n$ by $x_s^n$ with unit power $\mathbb{E}[|x_s^n|^2]=1$. Then, the received signal at the UB during time slot $n$ is given by
\begin{align}
\label{eq:10}
y_u^n = \sqrt{P_s}h_{su}^n x_s^n,
\end{align}
Notably, the noise factor is ignored at the UB since the backscatter's circuit only includes the passive components without the active components such as amplifiers, analog-to-digital (ADC) converter, or oscillators \cite{qian2016,wang2016,xiao2019resource}. Let us denote $c_u^n$ as the backscatter information signal at time slot $n$, the transmitted signal of the UB is then given as \cite{xiao2019resource}
\begin{align}
\label{eq:11}
x_u^n = \sqrt{\eta_u^n P_s}h_{su}^n x_s^n c^n_u,
\end{align}
where $\eta_u^{n}$ represents the backscatter coefficient during time slot $n$, $c_u^{n}$ denotes the UB's own information with $\mathbb{E}[|c_s^n|^2]=1$ at time slot $n$. Note that $\eta_u^n$ can not reach 1 in practice due to material and circuit losses \cite{xiao2019resource}. Hence, we set a threshold for $\eta_u^n \le \eta_{\rm max}$, with $0 < \eta_{\rm max} < 1$. Moreover, the additional noise and signal processing delay are ignored in \eqref{eq:11} which are widely utilized in \cite{xiao2019resource,lyu2017optimal,kang2018}. Consequently, the received signal at destination during time slot $n$ is given as
\begin{align}
\label{eq:12}
y_d^{n} = h_{ud}^{n} x_u^n + n_d,
\end{align}
where $n_d\sim {\cal{CN}}(0,\sigma_d^2)$ denote the additive white Gaussian noise (AWGN) at the destination. By substituting \eqref{eq:11} into \eqref{eq:12}, we have
\begin{align}
\label{eq:13}
y_d^{n} = \sqrt{ \eta_u^{n} P_s}h_{su}^n h_{ud}^{n} x_s^n c_u^{n} + n_d, 
\end{align}
where $\sqrt{P_s}h_{sd}^n x_s^n$ is the transmitted signal from $u\to d$ during time slot $n$ and $n_d$ is the noise power at the destination which is an independent and identically distributed (i.i.d.) complex Gaussian random variable with zero mean and variance $\sigma_d^2$. Thus, the SNR at the destination are represented as
\begin{align}
	\label{eq:14}
\gamma_d^{n} = \frac{ \eta_u^{n} P_s |h_{su}^n|^2 |h_{ud}^{n}|^2 }{\sigma_d^2},
\end{align}

Then, the achievable rate (in bps) at the UB and the destination during time slot $n$ can be respectively calculated as
\begin{align}
\label{eq:15}
R_u^{n} =  B \log_2\big(1+\gamma_u^{n}\big),\\
\label{eq:16}
R_d^{n} =  B \log_2\big(1+\gamma_d^{n}\big),
\end{align}
where $B$ denotes the sytem bandwidth in hertz (Hz); $\gamma_u^{n} = P_s |h^n_{su}|^2 $, $B$ is the total bandwidth. Especially, the instantaneous channel state information (CSI) (i.e., $h_{su}^n$ and  $h_{ud}^{n}$) are random variables, thus the instantaneous rate is also a random variable. Thus, the approximated received rate of the UB and the destination are adopted, which can be expressed as \cite{hua2019}
\begin{align}
\label{eq:17}
\bar{R}_u^{n} &= B \mathbb{E}[ \log_2 \big(1+\gamma_u^{n}\big)],\\
\label{eq:18}
\bar{R}_d^{n} &= B \mathbb{E}[ \log_2 \big(1+\gamma_d^{n}\big)]).
\end{align}

As explicit, it is difficult to obtain the closed-form expression of $\bar{R}_u^{n}$ and $\bar{R}_d^{n}$, and hence the approximation functions for $\bar{R}_u^{n}$ and $\bar{R}_d^{n}$ are expressed as in the following lemma:
\vspace{-0.1cm}
\begin{lemma}\label{lemma:1}
	The approximation expressions of $\bar{R}_u^{n}$ and $\bar{R}_d^{n}$ are respectively given as 
    \begin{align}
	\label{eq:Lemma1_1}
	\bar{R}_u^{n} &= B \log_2 \Bigg(1+  \frac{ \omega_0 P_s }{\big({H^2} + {{\left\| {\bq}_{n} - {\bw}_{s} \right\|}^2}\big)^{\alpha/2} } \Bigg), \\
	\label{eq:Lemma1_2}
	\bar{R}_d^{n} &= B \log_2 \Bigg(1+  \frac{ \Theta \eta_u^{n} P_s}{\big({H^2} + {{\left\| {{\bq}_n - {\bw}_{s}} \right\|}^2}\big)^{\alpha/2} \big({H^2} + {{\left\| {{\bq}_n - {\bw}_{d}} \right\|}^2}\big)^{\alpha/2} } \Bigg),
	\end{align}
\end{lemma}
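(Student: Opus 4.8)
The plan is to begin from the exact expectations in \eqref{eq:17}--\eqref{eq:18}, substitute the composite channel model \eqref{eq:4}--\eqref{eq:6} into the signal-to-noise ratios \eqref{eq:14}--\eqref{eq:16}, pull the deterministic large-scale fading coefficients outside the expectation, and then invoke the mean-channel approximation $\mathbb{E}[\log_2(1+aX)]\approx\log_2(1+a\,\mathbb{E}[X])$, which follows from Jensen's inequality and is tight whenever the small-scale fluctuations are mild (large Rician factor $K$, or the low-SNR regime where $\log_2(1+x)$ is nearly affine).

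First I would handle the source-to-UB hop. Writing $\gamma_u^n = P_s|h_{su}^n|^2 = P_s\,\psi_{su}^n\,|\tilde h_{su}^n|^2$ via \eqref{eq:4}, and noting that $\psi_{su}^n$ in \eqref{eq:5} is deterministic once $\bq_n$ is fixed, it factors out so that $\bar R_u^n = B\,\mathbb{E}\big[\log_2\big(1+P_s\psi_{su}^n|\tilde h_{su}^n|^2\big)\big]$. Applying Jensen to the concave map $x\mapsto\log_2(1+x)$ and using $\mathbb{E}[|\tilde h_{su}^n|^2]=1$ from \eqref{eq:6} gives $\bar R_u^n \le B\log_2\big(1+P_s\psi_{su}^n\big)$, and substituting $\psi_{su}^n = \omega_0(d_{su}^n)^{-\alpha}$ with $d_{su}^n$ from \eqref{eq:3} yields exactly \eqref{eq:Lemma1_1}. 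The same quantity is then taken as the working approximation, consistent with the treatment in \cite{hua2019}.

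Next I would treat the UB-to-destination hop in the same way. From \eqref{eq:13}--\eqref{eq:14}, $\gamma_d^n = \frac{\eta_u^n P_s}{\sigma_d^2}\,\psi_{su}^n\psi_{ud}^n\,|\tilde h_{su}^n|^2|\tilde h_{ud}^n|^2$; the large-scale factors and $\eta_u^n$ again come out of the expectation, and since $\tilde h_{su}^n$ and $\tilde h_{ud}^n$ belong to disjoint propagation paths they are independent, so $\mathbb{E}\big[|\tilde h_{su}^n|^2|\tilde h_{ud}^n|^2\big]=1$. The Jensen step then gives $\bar R_d^n \le B\log_2\big(1+\frac{\eta_u^n P_s}{\sigma_d^2}\psi_{su}^n\psi_{ud}^n\big)$; inserting $\psi_{su}^n\psi_{ud}^n = \omega_0^2(d_{su}^n)^{-\alpha}(d_{ud}^n)^{-\alpha}$ and collecting constants into $\Theta \triangleq \omega_0^2/\sigma_d^2$ produces \eqref{eq:Lemma1_2}.

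The main obstacle is that Lemma~\ref{lemma:1} is phrased as an equality whereas the argument above only delivers an upper bound through Jensen, so the honest statement is an approximation; the work in a careful version is to control the gap $\mathbb{E}[\log_2(1+aX)] - \log_2(1+a\,\mathbb{E}[X])$, e.g.\ by a second-order Taylor expansion of $\log_2(1+\cdot)$ that bounds it in terms of $\mathrm{Var}(X)$, and to observe that this variance shrinks as $K\to\infty$ so the approximation becomes exact in the LoS-dominant regime typical of UAV links. A secondary point to verify is the factorization for $\bar R_d^n$: it rests on the independence of the two hops' small-scale fading, which should be stated explicitly since the product $|\tilde h_{su}^n|^2|\tilde h_{ud}^n|^2$ is precisely what the approximation replaces by its mean.
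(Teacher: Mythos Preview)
Your treatment of $\bar R_u^n$ matches the paper exactly: a single Jensen step on the concave map $x\mapsto\log_2(1+x)$, using $\mathbb{E}[|\tilde h_{su}^n|^2]=1$.

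For $\bar R_d^n$, however, your one-shot Jensen argument does not reproduce the lemma as stated. You obtain $\Theta=\omega_0^2/\sigma_d^2$, whereas the paper defines $\Theta\triangleq e^{-E}\omega_0^2/\sigma_d^2$ with $E$ the Euler--Mascheroni constant. That extra $e^{-E}$ factor is not a typo; it comes from a genuinely different derivation. The paper does \emph{not} replace $|\tilde h_{su}^n|^2|\tilde h_{ud}^n|^2$ by its mean. Instead it sets $x=\eta_u^nP_s|h_{su}^n|^2/\sigma_d^2$ and $y=|h_{ud}^n|^2$, first applies Jensen in $y$ (concavity) to get $\hat f(x,y)=\mathbb{E}_X[\log_2(1+x\,\mathbb{E}_Y[y])]$, and then rewrites $x=e^{\ln x}$ and applies Jensen in the \emph{opposite} direction (convexity of $u\mapsto\log_2(1+e^u\cdot c)$) to obtain $\tilde f(x,y)=\log_2(1+e^{\mathbb{E}_X[\ln x]}\mathbb{E}_Y[y])$. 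Treating $x$ as exponentially distributed, one has $\mathbb{E}[\ln x]=-\ln\lambda_{su}-E$ via \cite[Eq.~4.331.1]{gradshteyn2014}, which is precisely what injects the $e^{-E}$ correction into $\Theta$. The resulting $\tilde f$ is neither an upper nor a lower bound of the true expectation, which the paper acknowledges; it is offered purely as an approximation.

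So your approach is a legitimate (and arguably cleaner) Jensen upper bound, but it proves a different formula from \eqref{eq:Lemma1_2}. If you want to match the lemma, you need the two-sided Jensen trick and the $\mathbb{E}[\ln x]$ computation; if you keep your route, you must redefine $\Theta$ and note that the downstream optimization is unaffected since $\Theta$ enters only as a positive multiplicative constant.
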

where $\Theta \triangleq \frac{e^{-E}\omega_0^2 }{\sigma_d^2}$.
\begin{proof}
See Appendix~A.
\end{proof}

\vspace{-0.3cm}
\subsection{Problem Formulation}
This section aims at maximizing the total data transmission from $u \to d$ by jointly optimizing the backscatter coefficient, DTS ratio, and UB trajectory with consideration of a linear EH model. Let us define $\bq \triangleq \{{\bq}_n, n \in {\cal N} \}$, $\boldsymbol \eta=\{\eta_u^{n}, n \in {\cal N} \}$, ${\boldsymbol \tau \triangleq \{ \tau_{n}, n \in {\cal N} \}}$. Then, the problem is mathematically formulated as follows 
\begin{IEEEeqnarray}{rCl}\label{eq:P_1}
	{\cal P}_1: &&\max_{\bq, {\boldsymbol \eta}, {\boldsymbol \tau} }~~ B \sum\limits_{n \in {\cal N}} \tau_{n} \delta_t \log_2 \Bigg(1+  \frac{ \Theta \eta_u^{n} P_s}{\big({H^2} + {{\left\| {{\bq}_n - {\bw}_{s}} \right\|}^2}\big)^{\alpha/2} \big({H^2} + {{\left\| {{\bq}_n - {\bw}_{d}} \right\|}^2}\big)^{\alpha/2} } \Bigg)  \IEEEyessubnumber \label{eq:P1:a}\\
	\mathtt{s.t.}~~
	&& B \sum\limits_{n \in {\cal N}} \tau_{n} \delta_t \log_2 \Bigg(1+  \frac{ \omega_0 P_s }{\big({H^2} + {{\left\| {{\bq}_n - {\bw}_{s}} \right\|}^2}\big)^{\alpha/2} } \Bigg) + \sigma S \notag \\ &&\ge B \textbf{ }\sum\limits_{n \in {\cal N}} \tau_{n} \delta_t \log_2 \Bigg(1+  \frac{ \Theta \eta_u^{n} P_s}{\big({H^2} + {{\left\| {{\bq}_n - {\bw}_{s}} \right\|}^2}\big)^{\alpha/2} \big({H^2} + {{\left\| {{\bq}_n - {\bw}_{d}} \right\|}^2}\big)^{\alpha/2} } \Bigg), \IEEEyessubnumber\label{eq:P1:b}\\
	&& B \sum\limits_{n \in {\cal N}} \tau_{n} \delta_t \log_2 \Bigg(1+  \frac{ \Theta \eta_u^{n} P_s}{\big({H^2} + {{\left\| {{\bq}_n - {\bw}_{s}} \right\|}^2}\big)^{\alpha/2} \big({H^2} + {{\left\| {{\bq}_n - {\bw}_{d}} \right\|}^2}\big)^{\alpha/2} } \Bigg) \ge S, \IEEEyessubnumber\label{eq:P1:c}\\
	&& \sum\limits_{i=1}^n  \tau_n \delta_t P_c  \le \sum\limits_{i=1}^n \frac{\mu (1-\tau_n) \delta_t \omega_0 P_s} {\big(H^2+{{\left\| {{\bq}_n - {\bw}_{s}} \right\|}^2}\big)^{\alpha/2}}  ,
	\IEEEyessubnumber\label{eq:P1:d}\\
	&& \left\| {\bq}_{n+1} - {\bq}_{n} \right\| \le \delta_d = V_{\rm max}{\delta _t},n=0,\dots,N-1, \IEEEyessubnumber\label{eq:P1:e}\\
	&& {\bq}_0 = {\bq}_{\rm I}, {\bq}_N={\bq}_{\rm F}, \IEEEyessubnumber\label{eq:P1:f}\\
	&& 0 \le \tau_{n} \le 1, n \in {\cal N},
	\IEEEyessubnumber\label{eq:P1:g}\\
	&&  0 \le \eta_u^{n} \le \eta_{\rm max}, \text{with}\; 0 < \eta_{\rm max} < 1,  n \in {\cal N}, \IEEEyessubnumber\label{eq:P1:h}
	%\\ 	&& 0 \le   P_s^{n} \le P_s^{\rm max}, n \in {\cal N}^\star,  \IEEEyessubnumber\label{eq:P1:i}
\end{IEEEeqnarray} 
where $S$ is the demanded data (in bits) by the destination; constraint \eqref{eq:P1:b} guarantees a non-empty caching at the UB; constraint \eqref{eq:P1:c} means that the total transmitted data on the DL from $u \to d$ should be larger than or equal to the demanded data of the destination; constraint \eqref{eq:P1:g} implies that the DTS ratio value must be less than or equal to 1 and constraint \eqref{eq:P1:h} signifies the limitation on the backscatter coefficient.

The problem ${\cal P}_1$ is a mixed integer non-linear program (MINLP), which is NP-hard. Specifically, the objective function, constraints \eqref{eq:P1:b}, \eqref{eq:P1:c}, \eqref{eq:P1:d} are non-convex. Thus, it is troublesome to find the direct solution of ${\cal P}_1$. In the succeeding section, we introduce an efficient method to solve it.

\section{Proposed Alternating Algorithm for Solving ${\cal P}_1$}
\label{sec:3}
To tackle the non-convexity of the problem ${\cal P}_1$, we first decompose ${\cal P}_1$ into three sub-problems, wherein we first target the optimization of DTS ratio for a given trajectory and backscatter coefficient, followed by the optimization of backscatter coefficient for a given trajectory and DTS ratio, and finally we perform the trajectory optimization for a given DTS ratio and backscatter coefficient. By employing the block coordinate descent (BCD) method \cite{hong2015unified}, we propose an efficient iterative algorithm wherein we alternately optimize three subproblems until the algorithm converges to a given threshold, $\epsilon >0$.

\subsection{Dynamic Time Splitting Ratio Optimization:} For any given UB trajectory $\bq$ and backscatter coefficient $\boldsymbol{\eta}$, the DTS ratio $\boldsymbol{\tau}$ can be obtained by solving the following optimization problem:
\begin{IEEEeqnarray}{rCl}\label{eq:P_1t}
	{\cal P}_1^{\boldsymbol{\tau}}: &&\max_{ {\boldsymbol \tau} }~~ \sum\limits_{n \in {\cal N}} \tau_{n} \delta_t \bar{R}_d^{n}  \IEEEyessubnumber \label{eq:P1t:a}\\
	\mathtt{s.t.}~~
	&& \sum\limits_{n \in {\cal N}} \tau_{n} \delta_t \bar{R}_u^{n} + \sigma S \notag \ge \sum\limits_{n \in {\cal N}} \tau_{n} \delta_t \bar{R}_d^{n}, \IEEEyessubnumber\label{eq:P1t:b}\\
	&& \sum\limits_{n \in {\cal N}} \tau_{n} \delta_t \bar{R}_d^{n} \ge S, \IEEEyessubnumber\label{eq:P1t:c}\\
	&& \sum\limits_{i=1}^n  \tau_n \delta_t P_c \le \sum\limits_{i=1}^n \frac{\mu (1-\tau_n) \delta_t \omega_0 P_s} {\big(H^2+{{\left\| {{\bq}_n - {\bw}_{s}} \right\|}^2}\big)^{\alpha/2}} ,
	\IEEEyessubnumber\label{eq:P1t:d}\\
	&& 0 \le \tau_{n} \le 1, n \in {\cal N}.
	\IEEEyessubnumber\label{eq:P1t:e}
\end{IEEEeqnarray} 

It is clear that ${\cal P}_1^{\boldsymbol{\tau}}$ is a linear optimization problem, and hence is convex. Moreover, it is easy to verify that the Slater's condition holds for ${\cal P}_1^{\boldsymbol{\tau}}$ and thus the KKT conditions are sufficient for optimality
\cite[Section 5.5]{Boy}. Then, the Lagrangian function corresponding to problem ${\cal P}_1^{\boldsymbol{\tau}}$ is expressed as
\begin{align}
\label{eq:B1}
{\cal{L}} ({\boldsymbol{\tau}}, \lambda_1, \lambda_2, \lambda_3, \lambda_4) \triangleq  F(\tau) +  \lambda_1 G(\boldsymbol{\tau}) + \lambda_2 H(\boldsymbol{\tau}) + \lambda_3 I(\boldsymbol{\tau}) + \lambda_4 J(\boldsymbol{\tau}),
\end{align}

with 
\begin{align}
\vspace{-0.1cm}
\label{eq:B2}
F(\boldsymbol{\tau}) &\triangleq \sum\limits_{n \in {\cal N}} \tau_{n} \delta_t \bar{R}_d^{n} \\
\label{eq:B3}
\vspace{-0.1cm}
G({\boldsymbol{\tau}}) &\triangleq \Big( \sum\limits_{n \in {\cal N}} \tau_{n} \delta_t \bar{R}_u^{n} + \sigma S - \sum\limits_{n \in {\cal N}} \tau_{n} \delta_t \bar{R}_d^{n} \Big) \ge 0 \\
\label{eq:B4}
H({\boldsymbol{\tau}}) &\triangleq \sum\limits_{n \in {\cal N}} \tau_{n} \delta_t \bar{R}_d^{n} - S \ge 0, \\
\label{eq:B5} I({\boldsymbol{\tau}}) &\triangleq  \sum\limits_{i=1}^n \chi_1(1-\tau_n) - \sum\limits_{i=1}^n \tau_n \delta_t P_c \ge 0, \\
\label{eq:B6}
J({\boldsymbol{\tau}}) &\triangleq 1 - \tau_n \ge 0 .
\end{align}
where $\lambda_1,$  $\lambda_2$, $\lambda_3,$ $\lambda_4$ is the Lagrangian dual variables; $\chi_1 \triangleq \frac{\mu  \delta_t \omega_0 P_s} {\big(H^2+{{\left\| {{\bq}_n - {\bw}_{s}} \right\|}^2}\big)^{\alpha/2}}$.

The stationarity condition is given as
\begin{align}
\label{eq:B7}
\frac{\partial {\cal L}({\boldsymbol{\tau}}, \lambda_1, \lambda_2, \lambda_3, \lambda_4)}{\partial {\boldsymbol{\tau}}} &= \sum\limits_{n \in {\cal N}}  \delta_t \bar{R}_d^{n} + \lambda_1 \Big( \sum\limits_{n \in {\cal N}} \delta_t \bar{R}_u^{n}  - \sum\limits_{n \in {\cal N}}  \delta_t \bar{R}_d^{n} \Big) + \lambda_2 \sum\limits_{n \in {\cal N}} \delta_t \bar{R}_d^{n} \notag \\ & - \lambda_3 \Big(\sum\limits_{n \in {\cal N}}\chi_1 + \sum\limits_{n \in {\cal N}} \delta_t P_c\Big) - \lambda_4 = 0
\end{align}

The conditions for primal feasibility are given as \eqref{eq:B3}, \eqref{eq:B4}, \eqref{eq:B5}, and \eqref{eq:B6}. Then, the complementary slackness conditions can be expressed as follows
\begin{align}
\label{eq:B8}
\lambda_1 G(\boldsymbol{\tau}) = 0,\\
\label{eq:B9}
\lambda_2 H(\boldsymbol{\tau}) = 0, \\
\label{eq:B10}
\lambda_3 I(\boldsymbol{\tau}) = 0, \\
\label{eq:B11}
\lambda_4 J(\boldsymbol{\tau}) = 0.
\end{align}

Furthermore, the dual feasibility conditions should hold $\lambda_1,\lambda_2,\lambda_3,\lambda_4 \ge 0$. The solution is then postulated in the following
theorem.
\begin{theorem}
\label{theorem:1}
The optimal value $\{\tau_n^\star\}$ to problem ${\cal P}_{1}^{\boldsymbol{\tau}}$ can be expressed as
\begin{subnumcases} {\label{eq:23} \tau_n^\star =}
\Bigg[\frac{\sigma S}{N \delta_t (\bar{R}^n_d - \bar{R}^n_u)}\Bigg]^1_0, \hfill \; {\rm iff} \bar{R}^n_d > \bar{R}^n_u, \forall n \in {\cal N} 
\label{eq:23a} \\
\Bigg[\frac{\chi_1 } { \chi_1 +  \delta_t P_c}\Bigg]^1_0. \label{eq:23b}
\end{subnumcases}
From Eq. \eqref{eq:23}, we report two possible solutions of $\{\tau_n^\star\}$. In order to reach to an optimal outcome, we select the best solution that maximizes the objective function in ${\cal P}_1^\tau$. Particularly, it is easy to verify that the optimal value of $\{\tau_n^\star\}$ from $\eqref{eq:23b}$ is guaranteed to be inside the feasible set, i.e., $0 \le \tau_n \le 1$.

\begin{proof}
	See Appendix~B.
\end{proof}

%Theorem \ref{theorem:1} signifies that the equality of Eq. \eqref{eq:P1t:d} always holds at time slot $n$ which means that the total harvested energy is used for UB's reflecting process. Particularly, from $\eqref{eq:23b}$ it is easy to verify that the optimal value of $\tau_n$ is guaranteed to be inside the feasible set, i.e., $0 \le \tau_n \le 1$.
\end{theorem}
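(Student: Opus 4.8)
The plan is to use the observation already recorded before the theorem: $\mathcal{P}_1^{\boldsymbol\tau}$ is a linear program for which Slater's condition holds, so the KKT system \eqref{eq:B7}--\eqref{eq:B11} together with primal and dual feasibility is both necessary \emph{and} sufficient for optimality. I would therefore proceed by extracting candidate stationary points from the KKT conditions and then invoke sufficiency to certify that the best feasible candidate is optimal, rather than arguing geometrically from scratch.

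The first substantive step is to note that the objective $F(\boldsymbol\tau)=\sum_{n}\tau_n\delta_t\bar{R}_d^n$ is strictly increasing in every coordinate $\tau_n$ (because $\bar{R}_d^n>0$), so at an optimum each $\tau_n$ must be pushed up until it meets an \emph{upper}-bounding constraint. The constraint $H(\boldsymbol\tau)=F(\boldsymbol\tau)-S\ge0$ only restricts feasibility and, since we are maximizing $F$, may be taken slack, i.e.\ $\lambda_2=0$. The remaining active candidates are then: the caching non-emptiness constraint $G(\boldsymbol\tau)\ge0$ in \eqref{eq:B3}, which for slots with $\bar{R}_d^n>\bar{R}_u^n$ caps $\sum_n\tau_n\delta_t(\bar{R}_d^n-\bar{R}_u^n)$ from above by $\sigma S$; the energy-causality constraint $I(\boldsymbol\tau)\ge0$ in \eqref{eq:B5}, which caps $\tau_n$ through $\tau_n\delta_t P_c\le\chi_1(1-\tau_n)$; and the box constraint $J(\boldsymbol\tau)=1-\tau_n\ge0$. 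I would then do a case split on which of $\lambda_1,\lambda_3,\lambda_4$ is positive. If $\lambda_1>0$, complementary slackness \eqref{eq:B8} forces $G(\boldsymbol\tau)=0$, i.e.\ $\sum_n\tau_n\delta_t(\bar{R}_d^n-\bar{R}_u^n)=\sigma S$; sharing this budget across the $N$ slots as in the formulation gives $\tau_n^\star\delta_t(\bar{R}_d^n-\bar{R}_u^n)=\sigma S/N$, hence $\tau_n^\star=\sigma S/\big(N\delta_t(\bar{R}_d^n-\bar{R}_u^n)\big)$, which is meaningful only when $\bar{R}_d^n>\bar{R}_u^n$; clipping to $[0,1]$ yields \eqref{eq:23a}. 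If instead $\lambda_3>0$, complementary slackness \eqref{eq:B10} makes the energy constraint tight, and in its per-slot incremental form $\tau_n\delta_t P_c=\chi_1(1-\tau_n)$ one solves for $\tau_n^\star=\chi_1/(\chi_1+\delta_t P_c)$, which is \eqref{eq:23b}; if only the box constraint is active ($\lambda_4>0$) then $\tau_n^\star=1$, exactly the upper clip value appearing in both branches. Sufficiency of KKT for the LP then certifies that whichever of these feasible candidates attains the larger objective is optimal — equivalently $\tau_n^\star$ is the smallest of $\sigma S/(N\delta_t(\bar{R}_d^n-\bar{R}_u^n))$, $\chi_1/(\chi_1+\delta_t P_c)$ and $1$.

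The step I expect to be the main obstacle is the handling of the energy-causality constraint \eqref{eq:P1t:d}: it is really a family of $N$ cumulative inequalities $\sum_{i\le n}\tau_i\delta_t P_c\le\sum_{i\le n}\chi_1(1-\tau_i)$ with $\chi_1$ depending on the slot through $\bq_n$, not the single inequality suggested by the notation in \eqref{eq:B5}. Reducing this family to the clean per-slot identity $\tau_n\delta_t P_c=\chi_1(1-\tau_n)$, and arguing that the aggregated stationarity relation \eqref{eq:B7} genuinely pins down each coordinate $\tau_n$ (rather than merely an aggregate), requires either a slot-by-slot decoupling argument or a greedy/"tightest cumulative'' argument; the algebra within each branch is routine by comparison. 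I would close by recording the easy feasibility check for \eqref{eq:23b}: since $\chi_1>0$ and $\delta_t P_c>0$, we have $0<\chi_1/(\chi_1+\delta_t P_c)<1$, so the clipping $[\cdot]_0^1$ in \eqref{eq:23b} is vacuous and this candidate is automatically admissible.
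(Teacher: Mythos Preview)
Your proposal is correct and follows essentially the same route as the paper: both arguments write down the KKT system for the linear program $\mathcal{P}_1^{\boldsymbol\tau}$, eliminate the multiplier on $H$ (you via monotonicity of the objective, the paper by exhaustive case-checking), and then read off the two candidate formulas \eqref{eq:23a} and \eqref{eq:23b} from complementary slackness on $G$ and $I$ respectively. Your explicit flag that \eqref{eq:P1t:d} is really a family of $N$ cumulative inequalities, and that collapsing it to the per-slot identity needs justification, is a point the paper's Appendix~B simply glosses over; otherwise the two proofs coincide.
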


\subsection{Backscatter Coefficient Optimization:} For any given trajectory $\bq$ and DTS ratio $\boldsymbol{\tau}$, the backscatter coefficient $\boldsymbol{\eta}$ can be achieved by solving the following optimization problem:
\begin{IEEEeqnarray}{rCl}\label{eq:Pe_1}
	{\cal P}_1^{\boldsymbol{\eta}}: &&\max_{ {\boldsymbol \eta} }~~ B \sum\limits_{n \in {\cal N}} \tau_{n} \delta_t \log_2 \Big(1+  \varphi_1 \eta_u^n \Big)  \IEEEyessubnumber \label{eq:P1e:a}\\
	\mathtt{s.t.}~~
	&& B \sum\limits_{n \in {\cal N}} \tau_{n} \delta_t \log_2 \Big(1+  \varphi_2 \Big) + \sigma S \ge B \sum\limits_{n \in {\cal N}} \tau_{n} \delta_t \log_2 \Big(1+  \varphi_1 \eta_u^n \Big), \IEEEyessubnumber\label{eq:P1e:b}\\
	&& B \sum\limits_{n \in {\cal N}} \tau_{n} \delta_t \log_2 \Big(1+  \varphi_1 \eta_u^n \Big) \ge S, \IEEEyessubnumber\label{eq:P1e:c}\\
	&&  0 \le \eta_u^{n} \le \eta_{\rm max}, \; \text{with}\; 0 < \eta_{\rm max} < 1,  n \in {\cal N}, \IEEEyessubnumber\label{eq:P1e:d}
\end{IEEEeqnarray} 
where $\varphi_1 \triangleq \frac{ \Theta P_s}{\big({H^2} + {{\left\| {{\bq}_n - {\bw}_{s}} \right\|}^2}\big)^{\alpha/2} \big({H^2} + {{\left\| {{\bq}_n - {\bw}_{d}} \right\|}^2}\big)^{\alpha/2} }$, $\varphi_2 \triangleq \frac{ \omega_0 P_s }{\big({H^2} + {{\left\| {{\bq}_n - {\bw}_{s}} \right\|}^2}\big)^{\alpha/2} }$. 

\vspace{0.2cm}
It is noteworthy that ${\cal P}_1^{\boldsymbol{\eta}}$
is a much simpler form as compared with ${\cal P}_1$, but it is still inviable to obtain the direct solution due to the non-convexity of the constraint \eqref{eq:P1e:b}. To convexify \eqref{eq:P1e:b}, an efficiently approximate solution is obtained based on the Successive Convex Approximation (SCA) method. To proceed, we define $\eta_u^{n,j}$ as the given backscatter coefficient at $j$-th iteration. Based on the concavity of $\log_2 (1 + x)$ w.r.t. $x$, we have
\begin{align}
\label{eq:36}
B \sum\limits_{n \in {\cal N}} \tau_{n} \delta_t \log_2 \Big(1+  \varphi_1 \eta_u^n \Big) \le B \sum\limits_{n \in {\cal N}} \tau_{n} \delta_t \log_2 \Big(1+  \varphi_1 \eta_u^{n,j} \Big) + \frac{B \sum\limits_{n \in {\cal N}} \tau_{n} \delta_t \varphi_1 \big(\eta_u^n-\eta_u^{n,j} \big)}{\ln 2  \big(1+\varphi_1 \eta_u^{n,j} \big)} \triangleq \tilde{R}_u^{j},
\end{align}

Consequently, ${\cal P}_1^{\boldsymbol{\eta}}$ can be re-written as
\begin{IEEEeqnarray}{rCl}\label{eq:Pe2_1}
	{\cal P}_{1.1}^{\boldsymbol{\eta}}: &&\max_{ {\boldsymbol \eta} }~~ B \sum\limits_{n \in {\cal N}} \tau_{n} \delta_t \log_2 \Big(1+  \varphi_1 \eta_u^n \Big)  \IEEEyessubnumber \label{eq:P1e2:a}\\
	\mathtt{s.t.}~~
	&& B \sum\limits_{n \in {\cal N}} \tau_{n} \delta_t \log_2 \Big(1+  \varphi_2 \Big) + \sigma S \ge \tilde{R}_u^{j}, \IEEEyessubnumber\label{eq:P1e2:b}\\
	&& \eqref{eq:P1e:c},\eqref{eq:P1e:d}, \IEEEyessubnumber\label{eq:P1e2:c}
\end{IEEEeqnarray}
which is convex. Furthermore, it is easy to verify that the Slater's conditions holds for ${\cal P}_{1.1}^{\boldsymbol{\eta}}$ and hence the solution can be obtained by solving the dual Lagrangian problem which significantly reduces the computational complexity \cite[Section 5.5]{Boy}. 

The corresponding Lagrangian for $	{\cal P}_{1.1}^{\boldsymbol{\eta}}$ is given by
\begin{align}
\label{eq:38}
{\cal L}_1(\boldsymbol{\eta}) \triangleq F_1(\boldsymbol{\eta}) + \lambda_5 G_1(\boldsymbol{\eta}) + \lambda_6 H_1(\boldsymbol{\eta}),
\end{align}
with
\begin{align}
\label{eq:39}
F_1(\boldsymbol{\eta}) &\triangleq B \sum\limits_{n \in {\cal N}} \tau_{n} \delta_t \log_2 \Big(1+  \varphi_1 \eta_u^n \Big), \\
\label{eq:40}
G_1(\boldsymbol{\eta}) &\triangleq \varphi_3 - \varphi_4 - \varphi_5 \big(\eta_u^n-\eta_u^{n,j} \big) \ge 0 \\
\label{eq:41}
H_1(\boldsymbol{\eta}) &\triangleq B \sum\limits_{n \in {\cal N}} \tau_{n} \delta_t \log_2 \Big(1+  \varphi_1 \eta_u^n \Big) - S \ge 0.
\end{align}
where $\varphi_3 \triangleq B \sum\limits_{n \in {\cal N}} \tau_{n} \delta_t \log_2 \Big(1+  \varphi_2 \Big) + \sigma S,$ $\varphi_4 \triangleq B \sum\limits_{n \in {\cal N}} \tau_{n} \delta_t \log_2 \Big(1+  \varphi_1 \eta_u^{n,j} \Big)$, $\varphi_5 \triangleq \frac{B \sum\limits_{n \in {\cal N}} \tau_{n} \delta_t \varphi_1 }{\ln 2  \big(1+\varphi_1 \eta_u^{n,j} \big)}$. 

For (local) optimality, the stationarity condition must hold
\begin{align}
\label{eq:42}
\frac{\partial {\cal L}_1({\boldsymbol{\eta}}, \lambda_5, \lambda_6)}{\partial {\boldsymbol{\eta}}} = \frac{B \sum\limits_{n \in {\cal N}} \tau_{n} \delta_t \varphi_1 }{\ln 2 \Big(1+  \varphi_1 \eta_u^n \Big)} - \lambda_5 \varphi_5 + \frac{\lambda_6 B \sum\limits_{n \in {\cal N}} \tau_{n} \delta_t \varphi_1 }{\ln 2 \Big(1+  \varphi_1 \eta_u^n \Big)} = 0.
\end{align}

The primal feasibility conditions are expressed as in \eqref{eq:40}, \eqref{eq:41}, and \eqref{eq:42}. Then, the complementary slackness conditions are given as
\begin{align}
\label{eq:43}
\lambda_5 G_1(\boldsymbol{\eta}) = 0,\\
\label{eq:44}
\lambda_6 H_1(\boldsymbol{\eta}) = 0.
\end{align}
The solution is then postulated in the following
theorem.
\begin{theorem}
	\label{theorem:2}
	The optimal value $\{\eta_u^{n,\star}\}$ to problem ${\cal P}_{1.1}^{\boldsymbol{\eta}}$ can be expressed as
	\begin{align}
	\label{eq:45}
	\eta_u^{n,\star} = \bigg[\frac{\varphi_3 - \varphi_4 + \varphi_5 \eta_u^{n,j}}{\varphi_5}\bigg]^{\eta_{\rm max}}_0.
	\end{align}
	\begin{proof}
		See Appendix~C.
	\end{proof}
\end{theorem}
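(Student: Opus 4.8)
The plan is to invoke convexity and Slater's condition to reduce ${\cal P}_{1.1}^{\boldsymbol{\eta}}$ to its KKT system and then solve that system in closed form. Because ${\cal P}_{1.1}^{\boldsymbol{\eta}}$ is convex and Slater's condition holds, the KKT conditions---stationarity \eqref{eq:42}, primal feasibility \eqref{eq:40}--\eqref{eq:41} together with the box constraint \eqref{eq:P1e:d}, dual feasibility $\lambda_5,\lambda_6\ge 0$, and complementary slackness \eqref{eq:43}--\eqref{eq:44}---are jointly necessary and sufficient for the global optimum. I would deal with the box constraint \eqref{eq:P1e:d} by first locating the stationary point while ignoring the bounds and then projecting onto $[0,\eta_{\rm max}]$; this is legitimate since the objective \eqref{eq:P1e2:a} is separable and monotone in each $\eta_u^n$ (equivalently, one appends multipliers for $\eta_u^n\ge 0$ and $\eta_{\rm max}-\eta_u^n\ge 0$ to \eqref{eq:38} and carries them through).

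The key structural fact is that \eqref{eq:P1e2:a} is strictly increasing in every $\eta_u^n$: $\partial F_1/\partial\eta_u^n = B\tau_n\delta_t\varphi_1/[\ln 2\,(1+\varphi_1\eta_u^n)]>0$, as $B,\tau_n,\delta_t,\varphi_1$ are all positive. Hence an optimal $\eta_u^n$ must be held down by an \emph{upper} bound. The only upper-bounding constraints are \eqref{eq:P1e2:b} (i.e. $G_1(\boldsymbol{\eta})\ge 0$) and $\eta_u^n\le\eta_{\rm max}$, whereas \eqref{eq:P1e:c} and $\eta_u^n\ge 0$ are lower bounds and cannot be the binding constraint limiting the increase of $\eta_u^n$. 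To identify the active one in the interior case, rearrange the stationarity condition \eqref{eq:42} so that a strictly positive quantity (namely $(1+\lambda_6)$ times $B\sum_{n\in{\cal N}}\tau_n\delta_t\varphi_1/[\ln 2\,(1+\varphi_1\eta_u^n)]$) is equated to $\lambda_5\varphi_5$; since $\varphi_5>0$ this forces $\lambda_5>0$, and then complementary slackness \eqref{eq:43} gives $G_1(\boldsymbol{\eta})=0$, i.e. $\varphi_3-\varphi_4-\varphi_5(\eta_u^n-\eta_u^{n,j})=0$, which rearranges to $\eta_u^n = (\varphi_3-\varphi_4+\varphi_5\eta_u^{n,j})/\varphi_5$.

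It remains to reinstate the bounds. By the monotonicity just noted, if this interior candidate exceeds $\eta_{\rm max}$ the constrained optimum is $\eta_u^n=\eta_{\rm max}$ (the multiplier on $\eta_{\rm max}-\eta_u^n\ge 0$ becomes active), and if it is negative the constrained optimum is $0$; otherwise the interior value is kept---precisely the clipped form $[\,\cdot\,]_0^{\eta_{\rm max}}$ appearing in \eqref{eq:45}. The step I expect to be the main obstacle is this last piece of bookkeeping: the Lagrangian \eqref{eq:38} as written omits the box multipliers, so one must either redo the stationarity condition with them present or argue cleanly---using separability and strict monotonicity---that the box-constrained maximizer is the Euclidean projection of the unconstrained stationary point onto $[0,\eta_{\rm max}]$. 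The remaining checks are routine: $\varphi_5>0$, feasibility of ${\cal P}_{1.1}^{\boldsymbol{\eta}}$ so that $[0,\eta_{\rm max}]$ is the correct clipping interval, and the observation that the precise value of $\lambda_6$ (which vanishes whenever \eqref{eq:P1e:c} is slack) is immaterial to the formula since it only rescales a strictly positive term.
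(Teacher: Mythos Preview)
Your approach is correct and rests on the same foundation as the paper's proof---both invoke KKT optimality for the convex problem ${\cal P}_{1.1}^{\boldsymbol{\eta}}$ and conclude that the constraint $G_1(\boldsymbol{\eta})\ge 0$ must be tight, yielding the interior formula which is then clipped to $[0,\eta_{\rm max}]$. The difference is in how that active constraint is identified: the paper (Appendix~C) performs an exhaustive enumeration of the four sign combinations of $(\lambda_5,\lambda_6)$, ruling out three of them by contradiction and keeping only $\lambda_5\ne 0,\ \lambda_6=0$; you instead use the strict monotonicity of the objective to argue directly that an upper-bounding constraint must bind, and then read off $\lambda_5>0$ from the stationarity relation $(1+\lambda_6)\cdot(\text{positive})=\lambda_5\varphi_5$. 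Your route is shorter and more conceptual---it avoids the case-by-case bookkeeping and makes transparent \emph{why} $G_1$ is the active constraint---while the paper's enumeration is more mechanical but leaves no case implicit. Both arrive at the same expression and handle the box constraint \eqref{eq:P1e:d} by projection, which you justify explicitly via separability and monotonicity whereas the paper simply appends it at the end.
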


%\red{We can prove that $\varphi_3$ is bigger than $\varphi_4$, thus $\eta_u^n$.}

\subsection{Trajectory Optimization:}
\label{subsec:c}
For given values of $\boldsymbol{\eta}$ and $\boldsymbol{\tau}$, the UAV trajectory $\bq$ can be achieved by solving the following problem
\begin{IEEEeqnarray}{rCl}\label{eq:Pq_1}
	\vspace{-0.05cm}
	{\cal P}_1^{\bq}: &&\max_{\bq }~~ B \sum\limits_{n \in {\cal N}} \tau_{n} \delta_t \log_2 \Bigg(1+  \frac{ \Theta \eta_u^{n} P_s}{\big({H^2} + {{\left\| {{\bq}_n - {\bw}_{s}} \right\|}^2}\big)^{\alpha/2} \big({H^2} + {{\left\| {{\bq}_n - {\bw}_{d}} \right\|}^2}\big)^{\alpha/2} } \Bigg)  \IEEEyessubnumber \label{eq:P1q:a}\\
	\mathtt{s.t.}~~
	\vspace{-0.1cm}
	&& B \sum\limits_{n \in {\cal N}} \tau_{n} \delta_t \log_2 \Bigg(1+  \frac{ \omega_0 P_s^{n} }{\big({H^2} + {{\left\| {{\bq}_n - {\bw}_{s}} \right\|}^2}\big)^{\alpha/2} } \Bigg) + \sigma S \notag \\ \vspace{-0.05cm} &&\ge B \sum\limits_{n \in {\cal N}} \tau_{n} \delta_t \log_2 \Bigg(1+  \frac{ \Theta \eta_u^{n} P_s}{\big({H^2} + {{\left\| {{\bq}_n - {\bw}_{s}} \right\|}^2}\big)^{\alpha/2} \big({H^2} + {{\left\| {{\bq}_n - {\bw}_{d}} \right\|}^2}\big)^{\alpha/2} } \Bigg), \IEEEyessubnumber\label{eq:P1q:b}\\ \vspace{-0.05cm}
	&& B \sum\limits_{n \in {\cal N}} \tau_{n} \delta_t \log_2 \Bigg(1+  \frac{ \Theta \eta_u^{n} P_s}{\big({H^2} + {{\left\| {{\bq}_n - {\bw}_{s}} \right\|}^2}\big)^{\alpha/2} \big({H^2} + {{\left\| {{\bq}_n - {\bw}_{d}} \right\|}^2}\big)^{\alpha/2} } \Bigg) \ge S, \IEEEyessubnumber\label{eq:P1q:c}\\
	&& \sum\limits_{i=1}^n  \tau_n \delta_t P_c  \le  \sum\limits_{i=1}^n  \frac{\mu (1-\tau_n) \delta_t \omega_0 P_s} {\big(H^2+{{\left\| {{\bq}_n - {\bw}_{s}} \right\|}^2}\big)^{\alpha/2}},
	\IEEEyessubnumber\label{eq:P1q:d}\\
	&& \eqref{eq:P1:e}, \eqref{eq:P1:f}, \IEEEyessubnumber\label{eq:P1q:e}
\end{IEEEeqnarray}

The problem ${\cal P}_1^{\bq}$ is a MINLP, which is difficult to  efficiently solve by utilizing standard optimization methods. To make ${\cal P}_1^{\bq}$ more tractable, we firstly introduce slack variables $z_1^n$ and $z_2^n$ such that $\big({H^2} + {{\left\| {{\bq}_n - {\bw}_{s}} \right\|}^2}\big) \le (z_1^n)^{2/\alpha}$ and $\big({H^2} + {{\left\| {{\bq}_n - {\bw}_{d}} \right\|}^2}\big) \le (z_2^n)^{2/\alpha}$, respectively. Let us denote ${\bold z} \triangleq \{z_1^n, z_2^n, n\in {\cal N}\}$, by which the problem ${\cal P}_1^{\bq}$ is rewritten as
\begin{IEEEeqnarray}{rCl}\label{eq:P1q1_1}
	{\cal P}_{1.1}^{\bq}: &&\max_{\bq, {\bold z} }~~ B \sum\limits_{n \in {\cal N}} \tau_{n} \delta_t \log_2 \Bigg(1+  \frac{ \Theta \eta_u^{n} P_s}{z_1^n z_2^n} \Bigg)  \IEEEyessubnumber \label{eq:P1q1:a}\\
	\mathtt{s.t.}~~
	&& \big({H^2} + {{\left\| {{\bq}_n - {\bw}_{s}} \right\|}^2}\big) \le (z_1^n)^{2/\alpha}, \IEEEyessubnumber\label{eq:P1q1:b}\\
	&& \big({H^2} + {{\left\| {{\bq}_n - {\bw}_{d}} \right\|}^2}\big) \le (z_2^n)^{2/\alpha}, \IEEEyessubnumber\label{eq:P1q1:c}\\ 
	&& B \sum\limits_{n \in {\cal N}} \tau_{n} \delta_t \log_2 \Bigg(1+  \frac{ \omega_0 P_s^{n} }{z_1^n } \Bigg) + \sigma S \ge B \sum\limits_{n \in {\cal N}} \tau_{n} \delta_t \log_2 \Bigg(1+  \frac{ \Theta \eta_u^{n} P_s}{z_1^n z_2^n} \Bigg), \IEEEyessubnumber\label{eq:P1q1:d}\\
	&& B \sum\limits_{n \in {\cal N}} \tau_{n} \delta_t \log_2 \Bigg(1+  \frac{ \Theta \eta_u^{n} P_s}{z_1^n z_2^n } \Bigg) \ge S, \IEEEyessubnumber\label{eq:P1q1:e}\\
	&& \sum\limits_{i=1}^n  \tau_n \delta_t P_c \le  \sum\limits_{i=1}^n   \frac{\mu (1-\tau_n) \delta_t \omega_0 P_s} {z_1^n },
	\IEEEyessubnumber\label{eq:P1q1:f}\\
	&& \eqref{eq:P1:e}, \eqref{eq:P1:f}, \IEEEyessubnumber\label{eq:P1q1:g}
\end{IEEEeqnarray}

Note that the problem ${\cal P}_{1.1}^{\bq}$ is simpler than ${\cal P}_{1}^{\bq}$, but it is still difficult to be directly solved. This is because the objective function is convex and the non-convexity of constraints \eqref{eq:P1q1:d}, \eqref{eq:P1q1:e}, and \eqref{eq:P1q1:f}. In the following, we transform ${\cal P}_{1.1}^{\bq}$ into a convex form by introducing the following lemmas:
\begin{lemma}
\label{lemma:2}
For any given $z_1^{n,j}$ and $z_2^{n,j}$ at $j$-th iteration, $\log_2 \Big(1+  \frac{ \omega_0 P_s^{n} }{z_1^n } \Big)$ and $\log_2 \Big(1+  \frac{ \Theta \eta_u^{n} P_s}{z_1^n z_2^n} \Big)$ are respectively lower bounded by
\begin{align}
\label{eq:48}
\log_2 \bigg(1+  \frac{ \omega_0 P_s }{z_1^n } \bigg) &\ge \log_2 \bigg(1+  \frac{ \omega_0 P_s }{z_1^{n,j} } \bigg) - \frac{\omega_0 P_s(z_1^n-z_1^{n,j})}{ z_1^{n,j} (z_1^{n,j} + \omega_0 P_s)\ln 2} \triangleq \Theta_1 ,\\ \label{eq:49}
\log_2 \bigg(1+  \frac{ \Theta \eta_u^{n} P_s}{z_1^n z_2^n} \bigg) &\ge \log_2 \Big(1+  \frac{ \Theta \eta_u^{n} P_s}{z_1^{n,j} z_2^{n,j}} \Big) - \frac{\Theta \eta_u^{n} P_s (z_1^n-z_1^{n,j}) }{ z_1^{n,j}(z_1^{n,j} z_2^{n,j}+\Theta \eta_u^{n} P_s)\ln 2} \notag \\ &- \frac{\Theta \eta_u^{n} P_s(z_2^n-z_2^{n,j})}{z_2^{n,j}(z_1^{n,j} z_2^{n,j}+\Theta \eta_u^{n} P_s)\ln 2} \triangleq \Theta_2.
\end{align}
\begin{proof}
It is observed that $\log_2 \big(1+1/x\big)$ and $\log_2 \big(1+1/{xy}\big)$ are convex functions, with $x>0$ and $y>0$. Then, we adopt the first-order Taylor approximation to respectively approximate above convex functions at any given feasible points $x^j,y^j$ as
\begin{align}
\log_2 \big(1+\frac{A_1}{x}\big) &\ge \log_2 \big(1+\frac{A_1}{x^j}\big) - \frac{A_1}{x^j (x^j+A_1)\ln 2 } (x-x^j),\\
\log_2 \big(1+\frac{A_2}{xy}\big) &\ge \log_2 \big(1+\frac{A_2}{x^jy^j}\big) - \frac{A_2(x-x^j) }{ x^j(x^jy^j+A_2)\ln 2} - \frac{A_2(y-y^j)}{ y^j(x^jy^j+A_2)\ln 2} .
\end{align}

By applying $A_1 = \omega P_s$, $x=z_1^n$, $y=z_2^n$, and $A_2 = \Theta \eta_u^n P_s$, then the Lemma \ref{lemma:2} is proved.
\end{proof}
\end{lemma}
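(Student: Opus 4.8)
The plan is to recognize that the lemma is nothing more than the statement that a convex function lies above its tangent plane, so the argument splits into three steps: isolate the right prototype functions, prove they are convex on the relevant domain, and then write out the first-order Taylor expansion at the iterate $(z_1^{n,j},z_2^{n,j})$ and substitute the constants. First I would set $x \triangleq z_1^n > 0$, $y \triangleq z_2^n > 0$, $A_1 \triangleq \omega_0 P_s > 0$, $A_2 \triangleq \Theta \eta_u^n P_s > 0$, and define $f_1(x) \triangleq \log_2(1+A_1/x)$ and $f_2(x,y) \triangleq \log_2(1+A_2/(xy))$; the two claimed inequalities \eqref{eq:48} and \eqref{eq:49} are then exactly $f_1(x) \ge f_1(z_1^{n,j}) + f_1'(z_1^{n,j})(x-z_1^{n,j})$ and the analogous two-variable tangent-plane bound for $f_2$ at $(z_1^{n,j},z_2^{n,j})$. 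Positivity of $x$ and $y$ (hence of all the denominators that appear) is guaranteed by $H>0$ together with the slack constraints \eqref{eq:P1q1:b}--\eqref{eq:P1q1:c}.

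The technical core is the convexity check. For $f_1$, writing $f_1(x)=\log_2(x+A_1)-\log_2 x$ and differentiating twice gives $f_1''(x)=\tfrac{1}{\ln 2}\big(x^{-2}-(x+A_1)^{-2}\big)>0$ on $(0,\infty)$, so $f_1$ is strictly convex. For $f_2$ the decomposition $f_2(x,y)=\log_2(xy+A_2)-\log_2 x-\log_2 y$ is misleading, because $\log_2(xy+A_2)$ is concave in each variable separately; I would therefore compute the full $2\times 2$ Hessian of $f_2$ on $\mathbb{R}^2_{++}$ directly. A short computation shows that the diagonal entries are positive multiples of $(2xy+A_2)/(xy+A_2)^2$, while the Hessian determinant is a positive multiple of $(2xy+A_2)^2-(xy)^2 = 3x^2y^2+4A_2xy+A_2^2 > 0$; hence the Hessian is positive definite and $f_2$ is jointly convex on $\mathbb{R}^2_{++}$.

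Given convexity, the last step is the standard global linear-underestimator inequality $f(v)\ge f(v^j)+\nabla f(v^j)^{\top}(v-v^j)$ applied at the feasible expansion point. Evaluating $f_1'(z_1^{n,j}) = -A_1/\big(z_1^{n,j}(z_1^{n,j}+A_1)\ln 2\big)$, $\partial_{z_1^n}f_2 = -A_2/\big(z_1^{n,j}(z_1^{n,j}z_2^{n,j}+A_2)\ln 2\big)$ and $\partial_{z_2^n}f_2 = -A_2/\big(z_2^{n,j}(z_1^{n,j}z_2^{n,j}+A_2)\ln 2\big)$, then substituting back $A_1=\omega_0 P_s$ and $A_2=\Theta\eta_u^n P_s$, reproduces \eqref{eq:48} and \eqref{eq:49}. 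I expect the one genuinely nontrivial point to be the joint convexity of $f_2$: arguing ``convex in each variable separately'' does not suffice, and the change of variables $(\ln x,\ln y)$ that would make the function obviously convex only yields log-convexity in those surrogate variables, which is not the linear lower bound the SCA step needs; the clean route is the positive-definiteness of the Hessian above, and that is the single calculation I would spell out in full.
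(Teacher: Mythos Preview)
Your proposal is correct and follows exactly the same route as the paper: identify the prototype functions $\log_2(1+A_1/x)$ and $\log_2(1+A_2/(xy))$, assert their convexity on $\mathbb{R}_{++}$ (resp.\ $\mathbb{R}_{++}^2$), apply the first-order Taylor underestimator at $(z_1^{n,j},z_2^{n,j})$, and substitute $A_1=\omega_0 P_s$, $A_2=\Theta\eta_u^n P_s$. The only difference is that you actually verify joint convexity of $f_2$ via the Hessian determinant, whereas the paper simply states the convexity as an observation; your extra care there is well placed, since the paper's bare assertion leaves the joint-convexity point unjustified.
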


\begin{lemma}
\label{lemma:3}
For any given $z_1^{n,j}$ at the $j$-th iteration, the lower bound of $1/z_1^n$ can be expressed as
\begin{align}
\label{eq:52}
\frac{1}{z_1^n} \ge \frac{1}{z_1^{n,j}} - \frac{1}{(z_1^{n,j})^2} (z_1^n - z_1^{n,j}) \triangleq \tilde{z}_1^n.
\end{align}
\end{lemma}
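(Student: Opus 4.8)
The plan is to recognize that the claimed inequality is simply the statement that the convex function $f(x)=1/x$ on $(0,\infty)$ lies above its tangent line at the point $x=z_1^{n,j}$, i.e., the global-underestimator (first-order) property of convex differentiable functions. First I would note that the slack variable $z_1^n$ is strictly positive throughout, since it is introduced through the constraint $\big(H^2+\left\|\bq_n-\bw_s\right\|^2\big)\le (z_1^n)^{2/\alpha}$ with $H>0$, so we may restrict attention to the domain $(0,\infty)$. On this domain $f(x)=1/x$ has $f''(x)=2/x^3>0$, hence $f$ is (strictly) convex.

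Next I would invoke the first-order characterization of convexity: for any differentiable convex $f$ and any $x,x^j$ in the (convex) domain, $f(x)\ge f(x^j)+f'(x^j)(x-x^j)$. With $f'(x)=-1/x^2$, setting $x=z_1^n$ and $x^j=z_1^{n,j}$ yields
\[
\frac{1}{z_1^n}\ \ge\ \frac{1}{z_1^{n,j}}-\frac{1}{(z_1^{n,j})^2}\,(z_1^n-z_1^{n,j})\ \triangleq\ \tilde{z}_1^n,
\]
which is exactly the assertion of the lemma. Equality holds precisely when $z_1^n=z_1^{n,j}$, so $\tilde{z}_1^n$ is a tight surrogate at the current iterate — the property that makes the resulting SCA step a valid minorant and keeps the alternating algorithm monotone.

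This argument parallels the one already used in the proof of Lemma~\ref{lemma:2}, where first-order Taylor expansions of the convex maps $\log_2(1+A_1/x)$ and $\log_2(1+A_2/(xy))$ were used to construct concave lower bounds; the present bound is needed in addition because, after those substitutions, constraint \eqref{eq:P1q1:f} still contains the raw term $1/z_1^n$, which must itself be linearized to render ${\cal P}_{1.1}^{\bq}$ convex. I expect no genuine obstacle here: the only point requiring attention is the positivity of $z_1^{n,j}$, which is automatic from the $H^2$ term, so the proof reduces to this one-line application of convexity and I would present it succinctly.
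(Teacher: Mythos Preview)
Your argument is correct: the inequality is exactly the first-order (tangent-line) underestimator of the convex function $1/x$ on $(0,\infty)$, and positivity of $z_1^n$ is guaranteed by the slack-variable constraint. The paper does not actually supply a separate proof for this lemma, but your approach is precisely the first-order Taylor/convexity argument used in the proof of Lemma~\ref{lemma:2}, so it matches the paper's methodology.
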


Bearing all the above discussions in mind, we solve the following approximate convex
problem at the $j$-th iteration:
\begin{IEEEeqnarray}{rCl}\label{eq:P1q2_1}
	{\cal P}_{1.2}^{\bq}: &&\max_{\bq, {\bold z} }~~ B \sum\limits_{n \in {\cal N}} \tau_{n} \delta_t \Theta_2 \IEEEyessubnumber \label{eq:P1q2:a}\\
	\mathtt{s.t.}~~
	&& \eqref{eq:P1:e}, \eqref{eq:P1:f}, \eqref{eq:P1q1:b}, \eqref{eq:P1q1:c}, \IEEEyessubnumber\label{eq:P1q2:b} \\
	&& B \sum\limits_{n \in {\cal N}} \tau_{n} \delta_t \Theta_1 + \sigma S \ge B \sum\limits_{n \in {\cal N}} \tau_{n} \delta_t \Theta_2, \IEEEyessubnumber\label{eq:P1q2:c}\\
	&& B \sum\limits_{n \in {\cal N}} \tau_{n} \delta_t \Theta_2 \ge S, \IEEEyessubnumber\label{eq:P1q2:d}\\
	&& \sum\limits_{i=1}^n \tau_n \delta_t P_c \le \sum\limits_{i=1}^n  \mu (1-\tau_n) \delta_t \omega_0 P_s \tilde{z}_1^n .
	\IEEEyessubnumber\label{eq:P1q2:e}
\end{IEEEeqnarray}

Since the objective function and all constraints pf ${\cal P}_{1.1}^{\bq}$ are convex, thus it can be directly solved by applying standard optimization methods \cite{Boy}. To this end, we propose an iterative algorithm based on the solutions of three sub-problems. The alternating algorithm is summarized as in Algorithm \ref{Alg1}. 
\vspace{-0.1cm}
\begin{algorithm}[h]
	\begin{algorithmic}[1]
		\label{Alg1}
		\protect\caption{Proposed SCA-based Iterative Algorithm to Solve ${\cal P}_1$}
		\label{alg_1}
		%\vspace{-0.2cm}
		\global\long\def\algorithmicrequire{\textbf{Initialization:}}
		\REQUIRE  Set $j:=0$ and initialize ${\boldsymbol{\eta}}^j$ and ${\bq}^j$.
		\vspace{-0.2cm}
		\REPEAT
		\vspace{-0.2cm}
		\STATE Solve ${\cal P}_1^{\boldsymbol{\tau}}$ for given $\{{\boldsymbol{\eta}}^j, {\bq}^j\}$ and denote the optimal solution as $\boldsymbol{\tau}^{j+1}$.
		\vspace{-0.2cm}
		\STATE Solve ${\cal P}_1^{\boldsymbol{\eta}}$ for given $\{{\boldsymbol{\tau}}^{j+1}, {\bq}^j\}$ and denote the optimal solution as $\boldsymbol{\eta}^{j+1}$.
		\vspace{-0.2cm}
		\STATE Solve ${\cal P}_1^{\bq}$ for given $ \{{\boldsymbol{\eta}}^{j+1}, {\boldsymbol{\tau}}^{j+1}\}  $ and denote the optimal solution as $\boldsymbol{\bq}^{j+1}$.
		\vspace{-0.2cm}
		\STATE Set $j:=j+1.$
		\vspace{-0.2cm}
		\UNTIL Convergence \\
\end{algorithmic} \end{algorithm}

\vspace{-0.1cm}
\subsection{Convergence and Complexity Analysis}
\subsubsection{Convergence Analysis}
%\vspace{-0.9cm}
\begin{proposition}
	\label{proposition_1}
	The proposed Algorithm \ref{Alg1} provides a solution that converges to at least a locally optimal solution.
\end{proposition}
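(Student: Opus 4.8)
The plan is to establish two facts: (i) the sequence of objective values of ${\cal P}_1$ generated by Algorithm~\ref{Alg1} is monotonically non-decreasing and bounded above, hence convergent; and (ii) any limit point of the iterates is a block-coordinatewise KKT point of ${\cal P}_1$, which is the precise meaning of ``at least a locally optimal solution'' for a BCD/SCA scheme.

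\emph{Monotonicity.} Denote by $\Phi(\bq,\boldsymbol{\eta},\boldsymbol{\tau})$ the objective in \eqref{eq:P1:a} and follow its value through the three updates of iteration $j$. In Step~2, $\boldsymbol{\tau}^{j+1}$ is a \emph{global} maximizer of the convex problem ${\cal P}_1^{\boldsymbol{\tau}}$ at fixed $(\bq^j,\boldsymbol{\eta}^j)$, given in closed form by Theorem~\ref{theorem:1}, while the previous iterate $\boldsymbol{\tau}^{j}$ is feasible for it; hence $\Phi(\bq^j,\boldsymbol{\eta}^j,\boldsymbol{\tau}^{j+1}) \ge \Phi(\bq^j,\boldsymbol{\eta}^j,\boldsymbol{\tau}^{j})$. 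In Step~3 I would exploit the two defining properties of the SCA bound \eqref{eq:36}: it is \emph{exact} at the expansion point $\eta_u^{n,j}$ and, by concavity of $\log_2(1+x)$, a global over-estimator. These imply that $\boldsymbol{\eta}^{j}$ remains feasible for the surrogate ${\cal P}_{1.1}^{\boldsymbol{\eta}}$, that every feasible point of ${\cal P}_{1.1}^{\boldsymbol{\eta}}$ is feasible for ${\cal P}_1^{\boldsymbol{\eta}}$, and that the two problems share the same objective; therefore the optimum $\boldsymbol{\eta}^{j+1}$ of Theorem~\ref{theorem:2} gives $\Phi(\bq^j,\boldsymbol{\eta}^{j+1},\boldsymbol{\tau}^{j+1}) \ge \Phi(\bq^j,\boldsymbol{\eta}^{j},\boldsymbol{\tau}^{j+1})$. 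Step~4 is handled the same way through Lemma~\ref{lemma:2} and Lemma~\ref{lemma:3}: the first-order expansions $\Theta_1,\Theta_2$ and $\tilde z_1^n$ are exact at $(z_1^{n,j},z_2^{n,j})$ and lower-bound the corresponding terms, so $\bq^j$ together with its slack variables stays feasible for ${\cal P}_{1.2}^{\bq}$ and solving it cannot decrease the objective relative to that point, i.e. $\Phi(\bq^{j+1},\boldsymbol{\eta}^{j+1},\boldsymbol{\tau}^{j+1}) \ge \Phi(\bq^{j},\boldsymbol{\eta}^{j+1},\boldsymbol{\tau}^{j+1})$. Chaining the three inequalities shows that $\{\Phi(\bq^j,\boldsymbol{\eta}^j,\boldsymbol{\tau}^j)\}_j$ is non-decreasing.

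\emph{Boundedness.} The feasible region of ${\cal P}_1$ is compact: $\boldsymbol{\tau}\in[0,1]^{N+1}$ by \eqref{eq:P1:g}, $\boldsymbol{\eta}\in[0,\eta_{\rm max}]^{N+1}$ by \eqref{eq:P1:h}, and $\bq$ is confined to a bounded set by the per-slot displacement bounds \eqref{eq:P1:e} together with the fixed endpoints \eqref{eq:P1:f} over the finite horizon $n\in{\cal N}$. Since $\Phi$ is continuous on this compact set, it is bounded above; a monotone non-decreasing real sequence bounded above converges, so the objective values converge.

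\emph{Optimality of a limit point.} By compactness I would pass to a convergent subsequence of $\{(\bq^j,\boldsymbol{\eta}^j,\boldsymbol{\tau}^j)\}$ and call its limit $(\bq^\star,\boldsymbol{\eta}^\star,\boldsymbol{\tau}^\star)$. At this limit the SCA expansion points coincide with the iterates, so each surrogate agrees with the function it replaces in both value \emph{and} gradient. Consequently the KKT systems of the three subproblems --- Eqs.~\eqref{eq:B7}--\eqref{eq:B11} for ${\cal P}_1^{\boldsymbol{\tau}}$, Eqs.~\eqref{eq:42}--\eqref{eq:44} for ${\cal P}_{1.1}^{\boldsymbol{\eta}}$ with $\eta_u^{n,j}\to\eta_u^{n,\star}$, and the analogous system for ${\cal P}_{1.2}^{\bq}$ with $z^{n,j}\to z^{n,\star}$, each valid because Slater's condition holds for every subproblem as checked in the text --- reduce in the limit to the partial KKT conditions of ${\cal P}_1$ with respect to $\boldsymbol{\tau}$, $\boldsymbol{\eta}$ and $\bq$, respectively. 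Taken together these are precisely the block-coordinatewise stationarity conditions of ${\cal P}_1$, so $(\bq^\star,\boldsymbol{\eta}^\star,\boldsymbol{\tau}^\star)$ is at least locally optimal. The main obstacle I expect is exactly this last step: turning global optimality of the convex \emph{surrogate} subproblems into genuine stationarity of the original non-convex ${\cal P}_1$. This requires the consistency/gradient-matching property of the approximations (which \eqref{eq:36} and Lemmas~\ref{lemma:2}--\ref{lemma:3} supply) together with a constraint-qualification argument ensuring the limiting multipliers remain bounded; rather than re-derive it, I would invoke the general BCD/SCA convergence framework \cite{hong2015unified}.
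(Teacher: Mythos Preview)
Your monotonicity and boundedness arguments are essentially the paper's own proof: the paper likewise chains the three per-block inequalities (using tightness of the first-order surrogates at the expansion point, optimality of each subproblem solution, and the lower-bound property of the surrogate) to obtain $\Pi({\boldsymbol{\tau}}^j,{\boldsymbol{\eta}}^j,{\bq}^j)\le\Pi({\boldsymbol{\tau}}^{j+1},{\boldsymbol{\eta}}^{j+1},{\bq}^{j+1})$, and then appeals to an upper bound imposed by the finite $T$, $P_s$ and $\eta_{\max}$ to conclude convergence of the objective sequence.

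Where you differ is in the last part. The paper stops after establishing convergence of the objective values and does not explicitly argue that the limit point is a stationary/KKT point of ${\cal P}_1$; you go further, invoking gradient consistency of the surrogates and the BCD/SCA framework of \cite{hong2015unified} to close that gap. This is a genuine addition rather than an alternative route: your argument actually addresses the ``locally optimal'' wording of the proposition, whereas the paper's proof, strictly read, only delivers monotone convergence of the objective. Your compactness-based boundedness is also a bit cleaner than the paper's informal appeal to resource limits, though both reach the same conclusion.
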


\begin{proof}
Let us define $\Pi({\boldsymbol{\tau}}^j, {\boldsymbol{\eta}}^j, {\bq}^j)$, $\Pi^{\boldsymbol{\eta}}_{\rm lb}({\boldsymbol{\tau}}^j, {\boldsymbol{\eta}}^j, {\bq}^j)$, and $\Pi^{\bq}_{\rm lb}({\boldsymbol{\tau}}^j, {\boldsymbol{\eta}}^j, {\bq}^j)$ as the objective values of ${\cal P}_1$, ${\cal P}_{1.1}^{\boldsymbol{\eta}}$, and ${\cal P}_{1.2}^{\bq}$ at the $j$-th iteration. In the $(j+1)$-th iteration, at line 2 of Algorithm \ref{Alg1}, we have
\begin{align}
\label{eq:54}
\Pi({\boldsymbol{\tau}}^j, {\boldsymbol{\eta}}^j, {\bq}^j) \ile \Pi({\boldsymbol{\tau}}^{j+1}, {\boldsymbol{\eta}}^j, {\bq}^j).
\end{align}
The inequality $(i)$ holds since $\boldsymbol{\tau}^{j+1}$ is a optimal solution of ${\cal P}_1^{\boldsymbol{\tau}}$. Then, at line 3 of Algorithm \ref{Alg1}, we have
\begin{align}
	\label{eq:55}
	\Pi({\boldsymbol{\tau}}^{j+1}, {\boldsymbol{\eta}}^j, {\bq}^j) \iieq \Pi^{\boldsymbol{\eta}}_{\rm lb}({\boldsymbol{\tau}}^{j+1}, {\boldsymbol{\eta}}^j, {\bq}^j) \iiile  \Pi^{\boldsymbol{\eta}}_{\rm lb}({\boldsymbol{\tau}}^{j+1}, {\boldsymbol{\eta}}^{j+1}, {\bq}^j) \iiiile  \Pi({\boldsymbol{\tau}}^{j+1}, {\boldsymbol{\eta}}^{j+1}, {\bq}^j).	
\end{align}
The equality $(i2)$ holds since the first-order Taylor approximation at given point ${\boldsymbol{\eta}}^j$ is tight as in Eq. \eqref{eq:36}. Moreover, the inequality $(i3)$ holds since ${\boldsymbol{\eta}}^{j+1}$ is a optimal solution of ${\cal P}_{1.1}^{\boldsymbol{\eta}}$. Then, the inequality $(i4)$ holds since the objective value of ${\cal P}_{1.1}^{\boldsymbol{\eta}}$ is a lower bound to that of ${\cal P}_{1}^{\boldsymbol{\eta}}$ at given point ${\boldsymbol{\eta}}^{j+1}$. At line 4, we have
\begin{align}
	\label{eq:57}
	\Pi({\boldsymbol{\tau}}^{j+1}, {\boldsymbol{\eta}}^{j+1}, {\bq}^j) \ifiveeq \Pi^{\bq}_{\rm lb}({\boldsymbol{\tau}}^{j+1}, {\boldsymbol{\eta}}^{j+1}, {\bq}^j) \isixle \Pi^{\bq}_{\rm lb}({\boldsymbol{\tau}}^{j+1}, {\boldsymbol{\eta}}^{j+1}, {\bq}^{j+1}) \isevenle \Pi({\boldsymbol{\tau}}^{j+1}, {\boldsymbol{\eta}}^{j+1}, {\bq}^{j+1}).	
\end{align}
The equality $(i5)$ holds since the first-order Taylor approximation as in \eqref{eq:48}, \eqref{eq:49}, and \eqref{eq:52} are tight at given point ${\bq}^j$, and the inequality $(i6)$ holds since ${\bq}^{j+1}$ is a optimal solution of ${\cal P}_{1.2}^{\bq}$. Furthermore, the inequality $(i7)$ holds since the optimal value of ${\cal P}_{1.2}^{\bq}$ is a lower bound of ${\cal P}_{1}^{\bq}$ at given ${\bq}^{j+1}$. From \eqref{eq:57} and \eqref{eq:54}, we have $\Pi({\boldsymbol{\tau}}^j, {\boldsymbol{\eta}}^j, {\bq}^j) \le \Pi({\boldsymbol{\tau}}^{j+1}, {\boldsymbol{\eta}}^{j+1}, {\bq}^{j+1})$ which proves that the objective value of ${\cal P}_{1}$ is non-decreasing over the iterations. Moreover, the objective value of ${\cal P}_{1}$ is restricted by an upper bound value due to the limited total traveling time $T$, transmit power $P_s$, and maximum value of $\eta_{\max}$. Thus, the convergence of Algorithm \ref{Alg1} is assured.
\end{proof}

\subsubsection{Complexity Analysis} We provide the worst-case complexity analysis for Algorithm \ref{Alg1}. Since the problem ${\cal P}_1^{\boldsymbol{\tau}}$ and ${\cal P}_1^{\boldsymbol{\eta}}$ can be solved by using the proposed closed-form expressions, thus the complexity is mainly relied on addressing ${\cal P}_{1}^{\bq}$. Moreover, the problem ${\cal P}_{1}^{\bq}$ includes logarithmic form, thus its complexity is $\mathcal{O}\big(L_1(3N)^{3.5}\big)$, where $3N$ is the number of scalar variables and $L_1$ is the number of iterations to update UB trajectory \cite{zhang2019securing}. Then, the overall complexity of Algorithm \ref{Alg1} is $\mathcal{O}\big(L_2L_1(3N)^{3.5}\big)$ where $L_2$
is the number of iterations until convergence.
\begin{figure}[t]
	\centering
	\includegraphics[width=10cm,height=6cm]{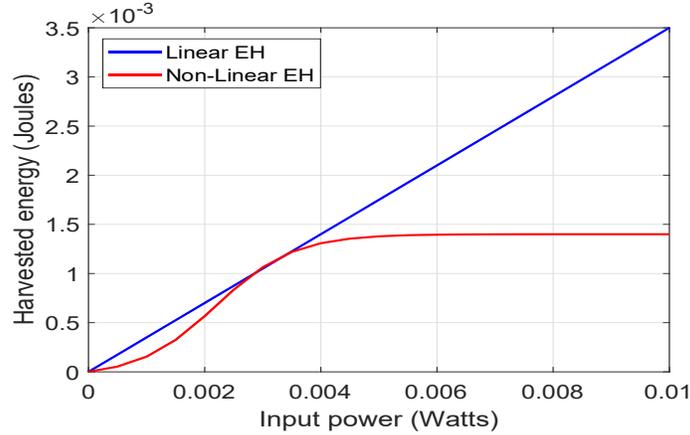}
	\caption { A comparison between the EH models with $\tau=0.5$, $\delta_t=1s$, $\mu=0.7,$ $\Xi=2.8$ mW, $a=1500$, and $b=0.0022$. }
	\label{fig:2}   
\end{figure}

\section{Non-Linear Energy Harvesting (NLEH) Model}
\label{sec:4}
In practice, the RF-to-DC conversion efficiency is usually not a constant value but a non-linear function of the input RF signal \cite{boshkovska2015practical,clerckx2018fundamentals}. As illustrated in Fig. \ref{fig:2}, the harvested energy of both models are well-matched when the received power is low, i.e., input power is less than or equal to 4 mW. When the input power is high, the harvested energy of the linear model is still monotonically increasing while it converges to a saturation value in the non-linear EH model which reveals the limitations of the linear model as compared to that of the non-linear model. Thus, this motivates us to investigate the throughput maximization of cache-aided UAV-enabled backscatter communications under non-linear EH model. Mathematically, the optimization problem is formulated as
\begin{IEEEeqnarray}{rCl}\label{eq:P_2}
	{\cal P}_2: &&\max_{\bq, {\boldsymbol \eta}, {\boldsymbol \tau}, }~~ B \sum\limits_{n \in {\cal N}} \tau_{n} \delta_t \log_2 \Bigg(1+  \frac{ \Theta \eta_u^{n} P_s}{\big({H^2} + {{\left\| {{\bq}_n - {\bw}_{s}} \right\|}^2}\big)^{\alpha/2} \big({H^2} + {{\left\| {{\bq}_n - {\bw}_{d}} \right\|}^2}\big)^{\alpha/2} } \Bigg)  \IEEEyessubnumber \label{eq:P2:a}\\
	\mathtt{s.t.}~~
	&& \eqref{eq:P1:b}, \eqref{eq:P1:c}, \eqref{eq:P1:e}, \eqref{eq:P1:f}, \eqref{eq:P1:g}, \eqref{eq:P1:h}, \IEEEyessubnumber\label{eq:P2:b}\\
	&& \sum\limits_{i=1}^n \tau_n \delta_t P_c \le \sum\limits_{i=1}^n   \frac{(1-\tau_n)\delta_t \Xi}{1-\phi} \Bigg(\frac{1}{1+e^{\frac{-\beta P_s\omega_0}{(H^2+{{\left\| {{\bq}_n - {\bw}_{s}} \right\|}^2})^{\alpha/2}}+\beta\nu} } - \phi \Bigg),
	\IEEEyessubnumber\label{eq:P2:c}
\end{IEEEeqnarray}

The problem ${\cal P}_2$ is still challenging to solve due to the non-convexity of objective function, constraints \eqref{eq:P1:b}, \eqref{eq:P1:c}, and especially the non-linear EH model in constraint \eqref{eq:P2:c} is neither non-convex nor concave. In order to solve ${\cal P}_2$, we propose a iterative algorithm based on BCD method as in Section \ref{sec:3}. Specifically, we decomposed ${\cal P}_2$ into three sub-problems, namely DTS ratio optimization with fixed trajectory and backscatter coefficient, backscatter coefficient optimization with fixed DTS ratio and UAV trajectory, and trajectory optimization with fixed DTS ratio and backscatter coefficient. Consequently, an overall algorithm is proposed to solve each subproblem alternately until convergence is achieved.

\subsection{DTS Ratio optimization:}
For any given value of trajectory $\bq$ and backscatter coefficient $\boldsymbol{\eta}$, the DTS ratio $\boldsymbol{\tau}$ is achieved by solving the following optimization problem:
\begin{IEEEeqnarray}{rCl}\label{eq:P_2t}
	{\cal P}_2^{\boldsymbol{\tau}}: &&\max_{ {\boldsymbol \tau} }~~ B \sum\limits_{n \in {\cal N}} \tau_{n} \delta_t \bar{R}_d^{n}  \IEEEyessubnumber \label{eq:P2t:a}\\
	\mathtt{s.t.}~~
	&& \eqref{eq:P1:b}, \eqref{eq:P1:c}, \eqref{eq:P1:g}, \eqref{eq:P2:c},  \IEEEyessubnumber\label{eq:P2t:c}
\end{IEEEeqnarray}

The optimal value of $\tau_n$ can be obtained as same as Theorem \ref{theorem:1}.
\begin{proposition}
	\label{proposition:1}
	The optimal value $\{\tau_n^\star\}$ to the problem ${\cal P}_2^{\boldsymbol{\tau}}$ can be expressed as
	\begin{subnumcases} {\label{eq:56} \tau_n^\star =}
		\Bigg[\frac{\sigma S}{N \delta_t (\bar{R}^n_d - \bar{R}^n_u)}\Bigg]^1_0, \hfill \; {\rm iff} \bar{R}^n_d > \bar{R}^n_u, \forall n \in {\cal N},
		\label{eq:56a} \\
		\Bigg[\frac{\chi_2 } { \chi_2 +  \delta_t P_c}\Bigg]^1_0. \label{eq:56b}
	\end{subnumcases}
%	\begin{align}
%	\label{eq:56}
%	\tau_n = \frac{\sum\limits_{n \in {\cal N}}  \chi_2 }{\Big(\sum\limits_{n \in {\cal N}} \chi_2 + \sum\limits_{n \in {\cal N}} \delta_t P_c\Big)},
%	\end{align}
	where $\chi_2 \triangleq  \frac{\delta_t \Xi}{1-\phi} \Bigg(\frac{1}{1+e^{\frac{-\beta P_s\omega_0}{(H^2+{{\left\| {{\bq}_n - {\bw}_{s}} \right\|}^2})^{\alpha/2}}+\beta\nu} } - \phi \Bigg)$.
	\begin{proof}
	The step-by-step to achieve $\tau_n$ can be presented in a similar way to Appendix~A. We only need to replace the linear harvesting model by non-linear harvesting model.
	\end{proof}
	
	%Theorem \ref{theorem:1} means that the total harvested energy is used for UB's reflecting process.  	Specially, from $\eqref{eq:56b}$ it is easy to check that the optimal value of $\tau_n$ always belongs to the feasible set, i.e., $0 \le \tau_n \le 1$.
\end{proposition}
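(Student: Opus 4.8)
The plan is to mirror the KKT analysis already carried out for Theorem~\ref{theorem:1}, since the only difference between ${\cal P}_1^{\boldsymbol{\tau}}$ and ${\cal P}_2^{\boldsymbol{\tau}}$ is that the linear energy-harvesting term $\chi_1(1-\tau_n)$ in the energy-causality constraint \eqref{eq:P1t:d} is replaced by the non-linear term $\chi_2(1-\tau_n)$ from \eqref{eq:P2:c}, where $\chi_2 \triangleq \frac{\delta_t \Xi}{1-\phi}\big(\frac{1}{1+e^{-\beta P_s \omega_0/(H^2+\|\bq_n-\bw_s\|^2)^{\alpha/2}+\beta\nu}}-\phi\big)$. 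First I would observe that, for fixed $\bq$ and $\boldsymbol{\eta}$, both $\bar{R}_u^n$ and $\bar{R}_d^n$ (via \eqref{eq:Lemma1_1}--\eqref{eq:Lemma1_2}) are constants in $\tau_n$, so ${\cal P}_2^{\boldsymbol{\tau}}$ is still a linear program in $\boldsymbol{\tau}$; Slater's condition holds exactly as before, so the KKT conditions remain necessary and sufficient.

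Next I would write down the Lagrangian in the same form as \eqref{eq:B1}, with $F,G,H,J$ unchanged and only $I(\boldsymbol{\tau})$ modified to $I(\boldsymbol{\tau}) \triangleq \sum_{i=1}^n \chi_2(1-\tau_n) - \sum_{i=1}^n \tau_n \delta_t P_c \ge 0$. The stationarity condition then reads, per time slot, $\delta_t \bar{R}_d^n + \lambda_1(\delta_t \bar{R}_u^n - \delta_t \bar{R}_d^n) + \lambda_2 \delta_t \bar{R}_d^n - \lambda_3(\chi_2 + \delta_t P_c) - \lambda_4 = 0$, which is identical to \eqref{eq:B7} with $\chi_1 \mapsto \chi_2$. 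Combining this with the complementary slackness conditions \eqref{eq:B8}--\eqref{eq:B11} and dual feasibility $\lambda_k \ge 0$, I would consider the two active-constraint regimes exactly as in Appendix~B: the caching constraint \eqref{eq:P1t:b} binding gives, upon summing the slackness condition $G(\boldsymbol{\tau})=0$ over the $N+1$ slots and exploiting that the optimal $\tau_n$ is slot-uniform in this regime, $\tau_n^\star = \big[\sigma S / (N\delta_t(\bar{R}_d^n - \bar{R}_u^n))\big]_0^1$ provided $\bar{R}_d^n > \bar{R}_u^n$; the energy-causality constraint \eqref{eq:P2:c} binding gives $\chi_2(1-\tau_n) = \tau_n \delta_t P_c$, i.e. $\tau_n^\star = \big[\chi_2/(\chi_2 + \delta_t P_c)\big]_0^1$. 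Projecting onto $[0,1]$ via the $[\,\cdot\,]_0^1$ operator handles constraint \eqref{eq:P1:g}, and as in Theorem~\ref{theorem:1} one picks whichever of the two candidates yields the larger objective in \eqref{eq:P2t:a}.

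Since $\chi_2 \ge 0$ (it equals $(1-\tau_n)^{-1}\delta_t^{-1}$ times the non-linear harvested energy $E_{\rm NL}^n$, which is non-negative by the definition of $\phi$ and $\Xi>0$) and $\delta_t P_c > 0$, the second candidate $\chi_2/(\chi_2+\delta_t P_c)$ automatically lies in $[0,1)$, so the clipping there is vacuous — exactly the analogue of the remark following \eqref{eq:23b}. The only point requiring a line of care is the claim that in the caching-binding regime the optimal DTS ratios are equal across slots, which is what lets one sum $G(\boldsymbol{\tau})=0$ into the single closed form; this follows because, with only the caching constraint active, the stationarity condition forces a common value determined by $\lambda_1$, and symmetry of $G$ in the $\{\tau_n\}$ then distributes $\sigma S$ evenly. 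I do not anticipate a genuine obstacle: the non-linearity of the EH model enters only through the \emph{coefficient} $\chi_2$, not through the dependence on $\tau_n$, so the LP structure and hence the entire KKT derivation of Appendix~B carries over verbatim with the substitution $\chi_1 \mapsto \chi_2$. The mild bookkeeping subtlety is just to confirm that replacing $\chi_1$ by $\chi_2$ does not disturb the sign conditions needed for the projection arguments, which it does not.
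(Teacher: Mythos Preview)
Your proposal is correct and follows exactly the approach the paper intends: the paper's own proof is a one-line pointer back to the KKT case analysis of Theorem~\ref{theorem:1} (Appendix~B), observing that the only change is the substitution $\chi_1 \mapsto \chi_2$ in the energy-causality constraint, and you have spelled out precisely that derivation. Your added remarks---that the LP structure in $\boldsymbol{\tau}$ is preserved because the non-linearity enters only through the constant coefficient $\chi_2$, and that $\chi_2 \ge 0$ keeps the second candidate automatically in $[0,1)$---are correct refinements of what the paper leaves implicit.
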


\vspace{-0.2cm}
\subsection{Backscatter Coefficient Optimization:} For any given DTS ratio $\boldsymbol{\tau}$ and trajectory $\bq$, the backscatter coefficient can be achieved by addressing the following optimization problem:
\begin{IEEEeqnarray}{rCl}\label{eq:Pe_2}
	{\cal P}_2^{\boldsymbol{\eta}}: &&\max_{ {\boldsymbol \eta} }~~ B \sum\limits_{n \in {\cal N}} \tau_{n} \delta_t \log_2 \Big(1+  \varphi_1 \eta_u^n \Big)  \IEEEyessubnumber \label{eq:P2e:a}\\
	\mathtt{s.t.}~~
	&& \eqref{eq:P1e:b}, \eqref{eq:P1e:c}, \eqref{eq:P1e:d}. \IEEEyessubnumber\label{eq:P2e:b}
\end{IEEEeqnarray}

It is easy to see that ${\cal P}_2^{\boldsymbol{\eta}}$ is totally the same as ${\cal P}_1^{\boldsymbol{\eta}}$, thus the closed-form expression of $\eta_u^n$ can be obtained as in \eqref{eq:45}.

\vspace{-0.1cm}
\subsection{UB Trajectory Optimization:} For any given values of backscatter coefficient $\boldsymbol{\eta}$ and DTS ratio $\boldsymbol{\eta}$, the UB trajectory can be designed by addressing the following optimization problem:
\begin{IEEEeqnarray}{rCl}\label{eq:P2q}
	{\cal P}_2^{\bq}: &&\max_{\bq }~~ B \sum\limits_{n \in {\cal N}} \tau_{n} \delta_t \log_2 \Bigg(1+  \frac{ \Theta \eta_u^{n} P_s}{\big({H^2} + {{\left\| {{\bq}_n - {\bw}_{s}} \right\|}^2}\big)^{\alpha/2} \big({H^2} + {{\left\| {{\bq}_n - {\bw}_{d}} \right\|}^2}\big)^{\alpha/2} } \Bigg)  \IEEEyessubnumber \label{eq:P2q:a}\\
	\mathtt{s.t.}~~
	&& \eqref{eq:P1:b}, \eqref{eq:P1:c}, \eqref{eq:P1:e}, \eqref{eq:P1:f}, \eqref{eq:P2:c},  \IEEEyessubnumber\label{eq:P2q:b}
\end{IEEEeqnarray}

To make ${\cal P}_2^{\bq}$ more tractable, we introduce slack variables $z_1^n$ and $z_2^n$ as in ${\cal P}_2^{\bq}$. Let us denote ${\bold z} \triangleq \{z_1^n, z_2^n, n\in {\cal N}\}$, then the problem ${\cal P}_2^{\bq}$ is rewritten as
\begin{IEEEeqnarray}{rCl}\label{eq:P2q1.1}
	{\cal P}_{2.1}^{\bq}: &&\max_{\bq, {\bold z} }~~ B \sum\limits_{n \in {\cal N}} \tau_{n} \delta_t \log_2 \Bigg(1+  \frac{ \Theta \eta_u^{n} P_s}{z_1^n z_2^n} \Bigg)  \IEEEyessubnumber \label{eq:P2q1:a}\\
	\mathtt{s.t.}~~
	&& \eqref{eq:P1:e}, \eqref{eq:P1:f}, \eqref{eq:P1q1:b}, \eqref{eq:P1q1:c}, \eqref{eq:P1q1:d}, \eqref{eq:P1q1:e},  \IEEEyessubnumber\label{eq:P2q1:b}\\
	&& \sum\limits_{i=1}^n \tau_n \delta_t P_c \le \sum\limits_{i=1}^n  (1-\tau_n)\delta_t \frac{\Xi}{1-\phi} \Bigg(\frac{1}{1+e^{\frac{-\beta P_s\omega_0}{z_1^n}+\beta\nu} } - \phi \Bigg),
	\IEEEyessubnumber\label{eq:P2q1:c}
\end{IEEEeqnarray}

Similar to Section \ref{subsec:c}, by applying Lemmas \ref{lemma:2} and \ref{lemma:3} to $	{\cal P}_{2.1}^{\bq}$, we have
\begin{IEEEeqnarray}{rCl}\label{eq:P2q2_1}
	{\cal P}_{2.2}^{\bq}: &&\max_{\bq, {\bold z} }~~ B \sum\limits_{n \in {\cal N}} \tau_{n} \delta_t \Theta_2 \IEEEyessubnumber \label{eq:P2q2:a}\\
	\mathtt{s.t.}~~
	&& \eqref{eq:P1:e}, \eqref{eq:P1:f}, \eqref{eq:P1q1:b}, \eqref{eq:P1q1:c}, \eqref{eq:P1q2:c}, \eqref{eq:P1q2:d}, \IEEEyessubnumber\label{eq:P2q2:b} \\
	&& \sum\limits_{i=1}^n \tau_n \delta_t P_c  \le \sum\limits_{i=1}^n  \frac{(1-\tau_n)\delta_t \Xi}{1-\phi} \Bigg(\frac{1}{1+e^{-\beta P_s\omega_0 \big(1/z_1^{n,j}-1/(z_1^{n,j})^2 (z_1^n-z_1^{n,j})\big) +\beta\nu} } - \phi \Bigg).\notag\\
	\IEEEyessubnumber\label{eq:P2q2:c}
\end{IEEEeqnarray}

The problem ${\cal P}_{2.2}^{\bq}$ is much simpler than ${\cal P}_{2.1}^{\bq}$, but it is still troublesome to obtain the direct solution due to the non-convexity of \eqref{eq:P2q2:c}. Let us define $\Theta_3 \triangleq \beta P_s \omega_0 \frac{1}{(z_1^{n,j})^2},$ $\Theta_4 \triangleq -\beta P_s \omega_0 \frac{2}{z_1^{n,j}} +\beta \nu$, the constraint \eqref{eq:P2q2:c} is then represented as
\begin{align}
\label{eq:61}
\sum\limits_{i=1}^n \tau_n \delta_t P_c \le \sum\limits_{i=1}^n (1-\tau_n)\delta_t \frac{\Xi}{1-\phi} \Bigg(\frac{1}{1+e^{\Theta_3 z_1^n + \Theta_4} } - \phi \Bigg).
\end{align}

Let us define $F_1(z_1^n) \triangleq \frac{1}{1+e^{\Theta_3 z_1^n + \Theta_4}}$ and the convexity of $F_1(z_1^n)$ w.r.t $z_1^n$ can be analyzed as follows
\begin{subnumcases} {\label{eq:62} F_1(z_1^n) =}
\rm{concave},\; \rm{if} \; z_1^n < \frac{-\Theta_4}{\Theta_3}, 
\label{eq:62a} \\
\rm{convex},\; \rm{if} \; z_1^n \ge \frac{-\Theta_4}{\Theta_3}.  \label{eq:62b}
\end{subnumcases}

Generally, in the case that $z_1^n < \frac{-\Theta_4}{\Theta_3}$, we then have \eqref{eq:61} is a convex constraint. Thus, ${\cal P}_{2.2}^{\bq}$ can be solved by standard methods \cite{boyd2002}. Besides, $z_1^n \ge \frac{-\Theta_4}{\Theta_3}$, constraint \eqref{eq:61} is non-convex. In order to troublesome this difficulty, we apply SCA method for function $F_1(z_1^n)$ such as
\begin{align}
\label{eq:63}
F_1(z_1^n) \ge \frac{1}{1+e^{\Theta_3 z_1^{n,j} + \Theta_4}} - \frac{\Theta_3e^{\Theta_3 z_1^{n,j} + \Theta_4} \big(z_1^n - z_1^{n,j}\big) }{\big(1+e^{\Theta_3 z_1^{n,j} + \Theta_4}\big)^2 } \triangleq \tilde{F}_1(z_1^n).
\end{align}
	
By substituting \eqref{eq:63} into \eqref{eq:61}, yields
\begin{align}
\label{eq:64}
\sum\limits_{i=1}^n \tau_n \delta_t P_c \le \sum\limits_{i=1}^n (1-\tau_n)\delta_t \frac{\Xi}{1-\phi} \Bigg(\tilde{F}_1(z_1^n) - \phi \Bigg).
\end{align}
 
To this end, by replacing \eqref{eq:64} to \eqref{eq:P2q2:c} in ${\cal P}_{2.2}^{\bq}$ corresponding to the case when $F_1(z_1^n)$ is convex. Thus, we can solve the approximate optimization problem ${\cal P}_{2.2}^{\bq}$ at $j$-th iteration by applying an alternating algorithm which is described as in Algorithm \ref{Alg2}. 

\vspace{-0.3cm}
\begin{algorithm}[h]
	\begin{algorithmic}[1]
		\label{Alg2}
		\protect\caption{Proposed SCA-based Iterative Algorithm to Solve ${\cal P}_2$}
		%\vspace{-0.2cm}
		\global\long\def\algorithmicrequire{\textbf{Initialization:}}
		\REQUIRE  Set $j:=0$ and initialize ${\boldsymbol{\eta}}^j$, ${\boldsymbol{\tau}}^j$, and ${\bq}^j$.
		\vspace{-0.2cm}
		\REPEAT
		\vspace{-0.2cm}
		\STATE Solve ${\cal P}_2^{\boldsymbol{\tau}}$ for given $\{{\boldsymbol{\eta}}^j, {\bq}^j\}$ and denote the optimal solution as $\boldsymbol{\tau}^{j+1}$.
		\vspace{-0.2cm}
		\STATE Solve ${\cal P}_2^{\boldsymbol{\eta}}$ for given $\{{\boldsymbol{\tau}}^{j+1}, {\bq}^j\}$ and denote the optimal solution as $\boldsymbol{\eta}^{j+1}$.
		\vspace{-0.2cm}
		\STATE Solve ${\cal P}_2^{\bq}$ for given $ \{{\boldsymbol{\eta}}^{j+1}, {\boldsymbol{\tau}}^{j+1}\}  $ and denote the optimal solution as $\boldsymbol{\bq}^{j+1}$.
		\vspace{-0.2cm}
		\STATE Set $j:=j+1.$
		\vspace{-0.2cm}
		\UNTIL Convergence. \\
\end{algorithmic} \end{algorithm}

%\begin{table}[t]
%	\caption{Simulation Parameters}
%	\label{table:Sim}
%	\centering
%	{\setlength{\tabcolsep}{0.2em}
%		\setlength{\extrarowheight}{0.1em}
%		\begin{tabular}{l|l}
%			\hline
%			Parameters & Values \\
%			\hline\hline
%			UAV's altitude, $H$      & 10  meters \cite{hua2019}\\
%			Maximum speed, $V_{\rm max}$ & 20 m/s \cite{hua2019}\\
%			Noise power, $\sigma^2$ & -90 dB \cite{hua2019}\\
%			Path loss exponent, $\alpha$ & 2.3 \cite{Tran2020FDUAV}\\ Circuit power consumption $P_c$ & $10^{-6} W$ \cite{lu2018ambient}
%			\\
%			Channel power gain at reference distance, $\omega_0$ & -30 dB \cite{hua2019}\\
%			Time slot duration, $\delta_t$ & 0.5 second \\
%			Error tolerance threshold, $\epsilon$ & $10^{-4}$ \\
%			Energy harvesting coefficient, $\mu$ & 0.9 \cite{hua2019,jayakody2019self}\\
%			Maximum backscatter coefficient, $\eta_{\rm max}$ & 0.5 \cite{xiao2019resource} \\				
%			Maximum
%			harvested power $\Xi$ & 2.8 mW \cite{kang2019dynamic}\\	
%			EH circuit specification, $\beta$ & 1500 \cite{kang2019dynamic,Sumit2020Backcom} \\   
%			EH circuit specification, $\nu$ & 0.0022 \cite{kang2019dynamic,Sumit2020Backcom} \\
%			Transmit power, $P_s$ & $10^3$ mW \cite{hua2019,lyu2017optimal} \\
%		 Demanded data of destination, $S$ & $10^{-3}$ bits/Hz	\\
%			\hline	
%			\hline			
%	\end{tabular}}
%\end{table}

\begin{figure*}[t]
	\centering    
	\subfigure[ $T$ = 6 seconds.] {\label{fig:3a}\includegraphics[width=8cm,height=6cm]{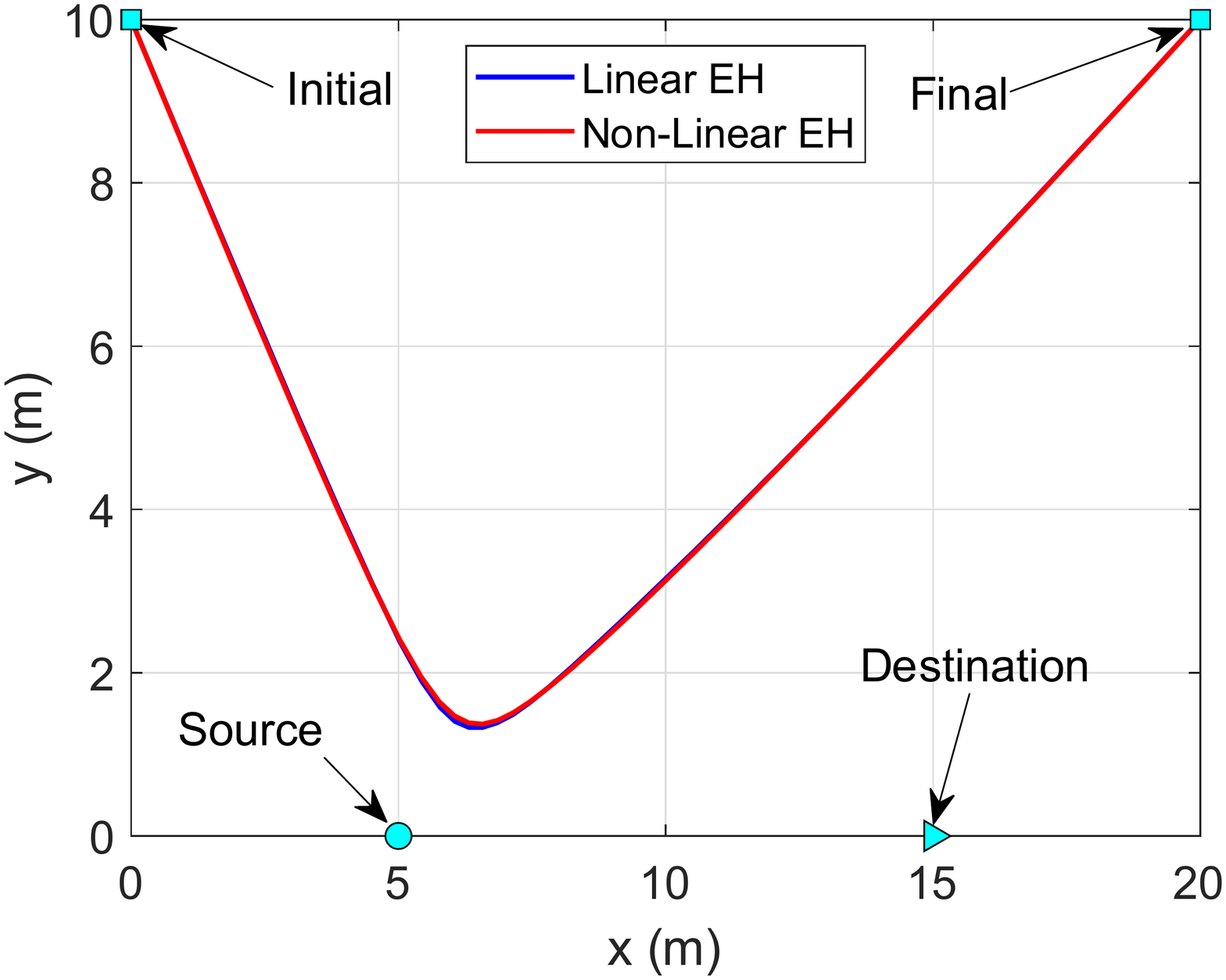}}
	\subfigure[$T$ = 20 seconds.] {\label{fig:3b}\includegraphics[width=8cm,height=6cm]{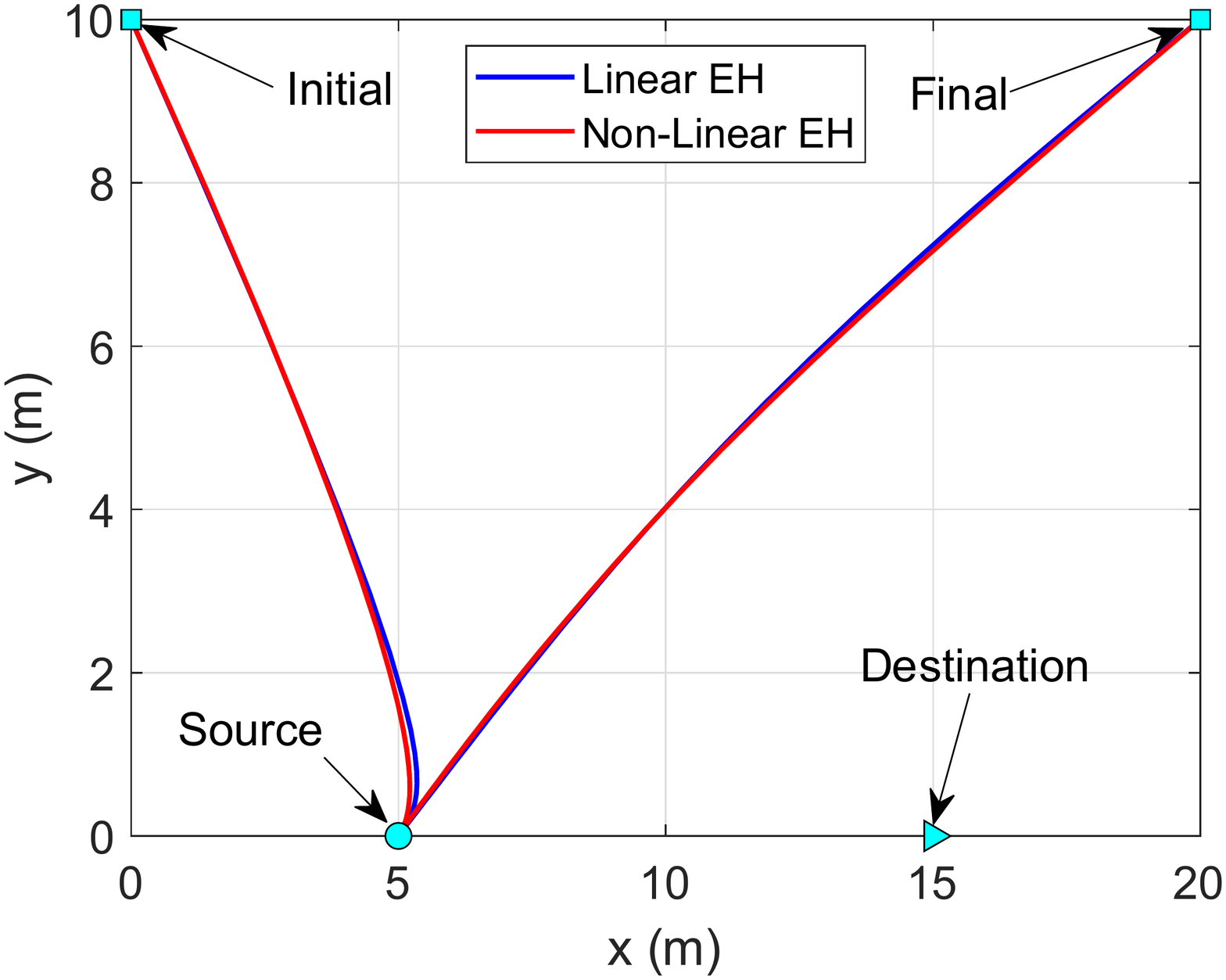}}
	\caption{UB trajectory obtained by our proposed schemes corresponding to linear and non-linear EH models.}
	\label{fig:3}
\end{figure*}

%\vspace{-0.4cm}
\section{Simulation Results}
\label{Sec:Num}
In this section, the numerical results are given to validate the performance of our proposed schemes under linear and non-linear EH models. We assume that the horizontal locations of the source and destination are set as ${\bw}_s=[5 m,0]^T$ and ${\bw}_s=[15 m,0]^T$, respectively. The UB's initial and final locations are respectively set ${\bq}_{\rm I}=[0,10m]^T$ and ${\bq}_{\rm F}=[20m,10m]^T$. The UB altitude is fixed at $H=10$ meters with maximum transmit power $P=5$ W and maximum velocity $V_{\max}=$ 20 m/s \cite{hua2019}. Moreover, the power channel gain at reference distance $d=1m$ is $-30$ dB \cite{hua2019} and the noise power at the source and destination is $-90$ dB \cite{hua2019}. The circuit power consumption of typical backscatter trasmitter is less than 1 $\mu$W  \cite{lu2018ambient}, thus we set $P_c=10^{-6}$ W. The maximum backscatter coefficient equals to 0.5 \cite{xiao2019resource}. Each time slot duration equals to $0.5$ second and energy harvesting coefficient is 0.9 \cite{hua2019,jayakody2019self}. For non-linear EH model, the maximum harvested power $\Xi=2.8$ mW, EH circuit specifications $\beta=1500$, and $\nu=0.0022$ \cite{kang2019dynamic,Sumit2020Backcom}. The system bandwidth is $B=1$ Mhz. The error tolerance threshold of alternating algorithms is set to $\epsilon=10^{-4}$. To highlight the designed algorithms, we compare our proposed methods with benchmark schemes. Specifically, three benchmark schemes are described as in Table \ref{table:2}. Moreover, LFTra scheme (or NLFTra scheme) is designed similar to that of Algorithm 1 (or 2) but with fixed trajectory, wherein the UB flies from initial position to the middle point between source and destination then it returns to the final position.

\begin{table}[h!]
		\caption{Benchmark Schemes}
		\label{table:2}
		\centering
		{\setlength{\tabcolsep}{0.2em}
			\setlength{\extrarowheight}{0.1em}
			\begin{tabular}{l|l|l}
			\hline \hline
			Scheme & Notation &Descriptions\\
			\hline
			LEH model-based no caching & LNC & Similar to Algorithm \ref{Alg1} but without caching capability \\
			NLEH model-based no caching & NLNC &
			Similar to Algorithm \ref{Alg2} but without caching capability\\
			LEH model-based fixed DTS ratio  &LFTau & Similar to Algorithm \ref{Alg1} but with fixed DTS ratio, $\tau_n=$ 0.5	 \\
			NLEH model-based fixed DTS ratio  & NLFTau & Similar to Algorithm \ref{Alg2} but with fixed DTS ratio, $\tau_n=$ 0.5 \\
			LEH model-based fixed trajectory & LFTra & Similar to Algorithm \ref{Alg1} but with fixed trajectory \\
			NLEH model-based fixed trajectory & LFTra & Similar to Algorithm \ref{Alg2} but with fixed trajectory \\	
				\hline	
				\hline			
	\end{tabular}}
\end{table}

%\begin{itemize} 
%	\item \underline{LEH model-based no caching algorithm (LNC):} this method is designed similar to Algorithm \ref{Alg1} but without caching capability.
%	\item \underline{LEH model-based no caching algorithm (NLNC):} this method is designed similar to Algorithm \ref{Alg2} but without caching capability.
%	\item \underline{LEH model-based fixed DTS ratio (i.e., $\tau_n$) algorithm (LEFTau):} this method is designed similar to Algorithm \ref{Alg1} but with fixed DTS ratio, $\tau_n=$ 0.5.
%	\item \underline{NLEH model-based fixed DTS ratio algorithm (NLFTau):} this method is designed similar to Algorithm \ref{Alg2} but with fixed DTS ratio, i.e., $\tau_n=$ 0.5.
%	\item \underline{LEH model-based fixed trajectory algorithm (LFTra):} this method is designed similar to Algorithm \ref{Alg1} but with fixed trajectory.
%	\item \underline{NLEH model-based fixed trajectory algorithm (NLFTra ):} this method is designed similar to Algorithm \ref{Alg2} but with fixed trajectory.
%\end{itemize}

Fig. \ref{fig:3} illustrates the UB trajectories obtained for linear and non-linear EH models at different traveling time, i.e., $T$ equals to 6 seconds and 20 seconds, and $\delta_t=0.2$ s, $P_s=1$ W. It can be seen that the UB tends to fly from initial point to source node and then return back to the final location in both EH models. Specifically, when the total traveling time of UB is increased from 6 seconds to 20 seconds, UB moves in the direction closer to the source location to improve the total throughput.
\begin{figure*}[t]
	\centering    
	\subfigure[Linear EH model.] {\label{fig:4a}\includegraphics[width=8cm,height=6cm]{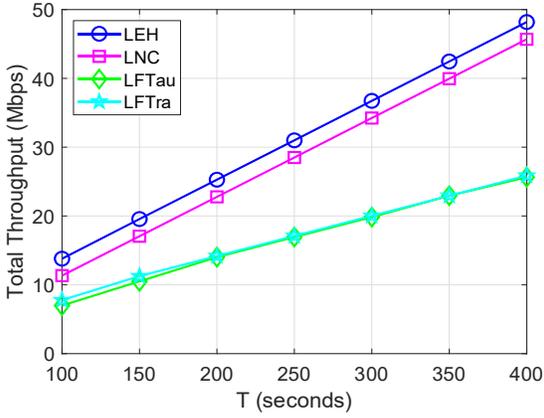}}
	\subfigure[Non-linear EH model.] {\label{fig:4b}\includegraphics[width=8cm,height=6cm]{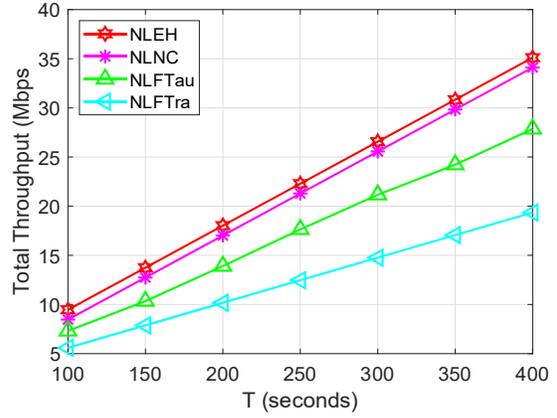}}
	\caption{Total throughput versus traveling time $T$ for the two EH models.}
	\label{fig:4}
\end{figure*}
\begin{figure*}[t]
	\centering    
	\subfigure[Linear EH model.] {\label{fig:5a}\includegraphics[width=8cm,height=6cm]{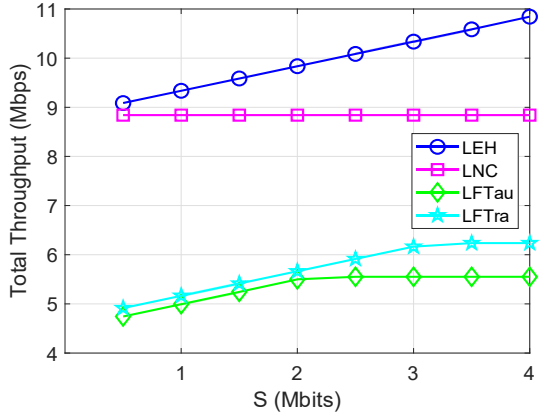}}
	\subfigure[Non-linear EH model.] {\label{fig:5b}\includegraphics[width=8cm,height=6cm]{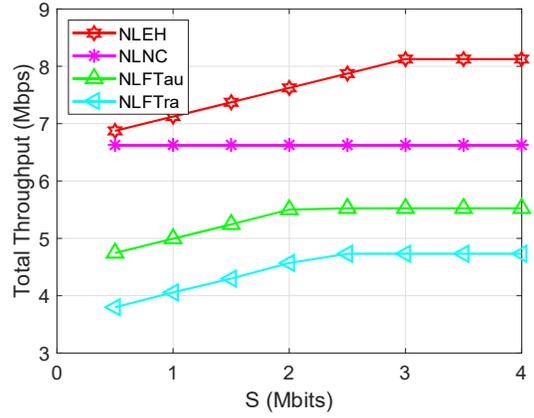}}
	\caption{Total throughput versus demanded data of the destination $S$ for the two EH models.}
	\label{fig:5}
\end{figure*}
\begin{figure*}[t]
	\centering    
	\subfigure[Linear EH model.] {\label{fig:6a}\includegraphics[width=8cm,height=6cm]{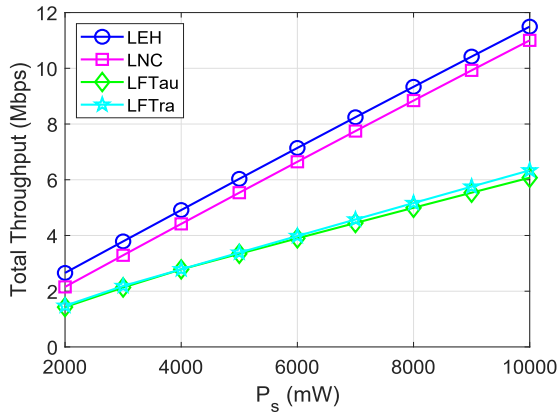}}
	\subfigure[Non-linear EH model.] {\label{fig:6b}\includegraphics[width=8cm,height=6cm]{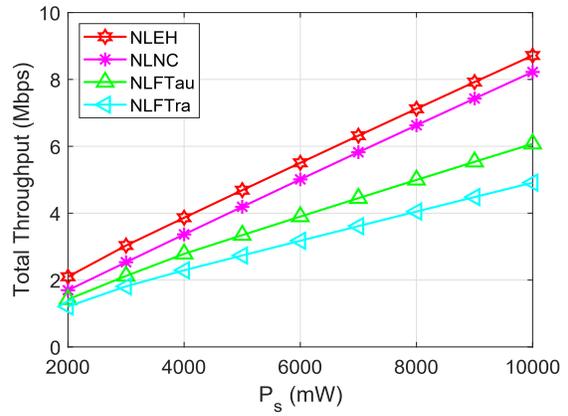}}
	\caption{Total throughput versus transmit power of the source $P_s$ for the two EH models.}
	\label{fig:6}
\end{figure*}
\begin{figure*}[t]
	\centering    
	\subfigure[Linear EH model.] {\label{fig:7a}\includegraphics[width=8cm,height=6cm]{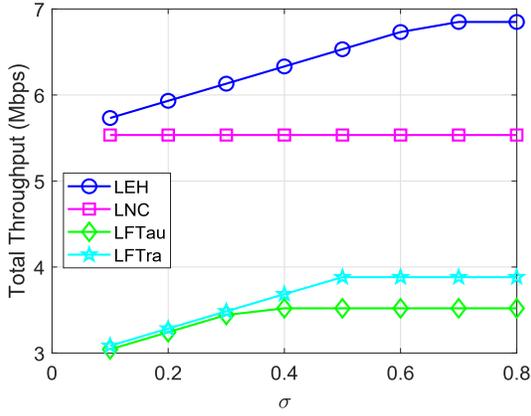}}
	\subfigure[Non-linear EH model.] {\label{fig:7b}\includegraphics[width=8cm,height=6cm]{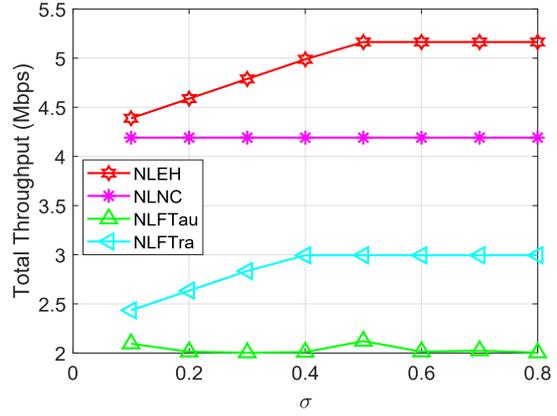}}
	\caption{Total throughput versus $\sigma$ for the two EH models.}
	\label{fig:7}
\end{figure*}
\begin{figure*}[t]
	\centering    
	\subfigure[Linear EH model.] {\label{fig:8a}\includegraphics[width=8cm,height=6cm]{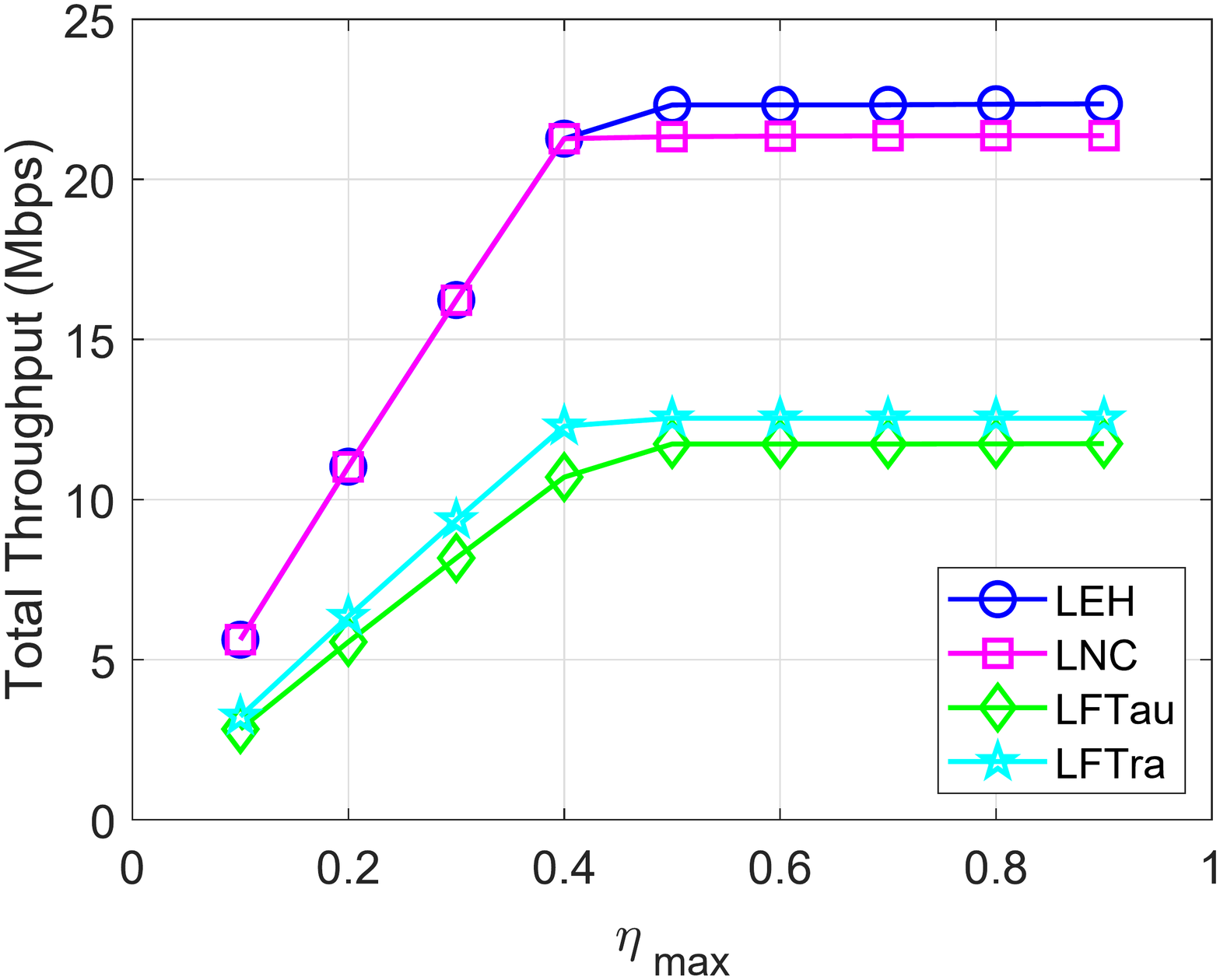}}
	\subfigure[Non-linear EH model.] {\label{fig:8b}\includegraphics[width=8cm,height=6cm]{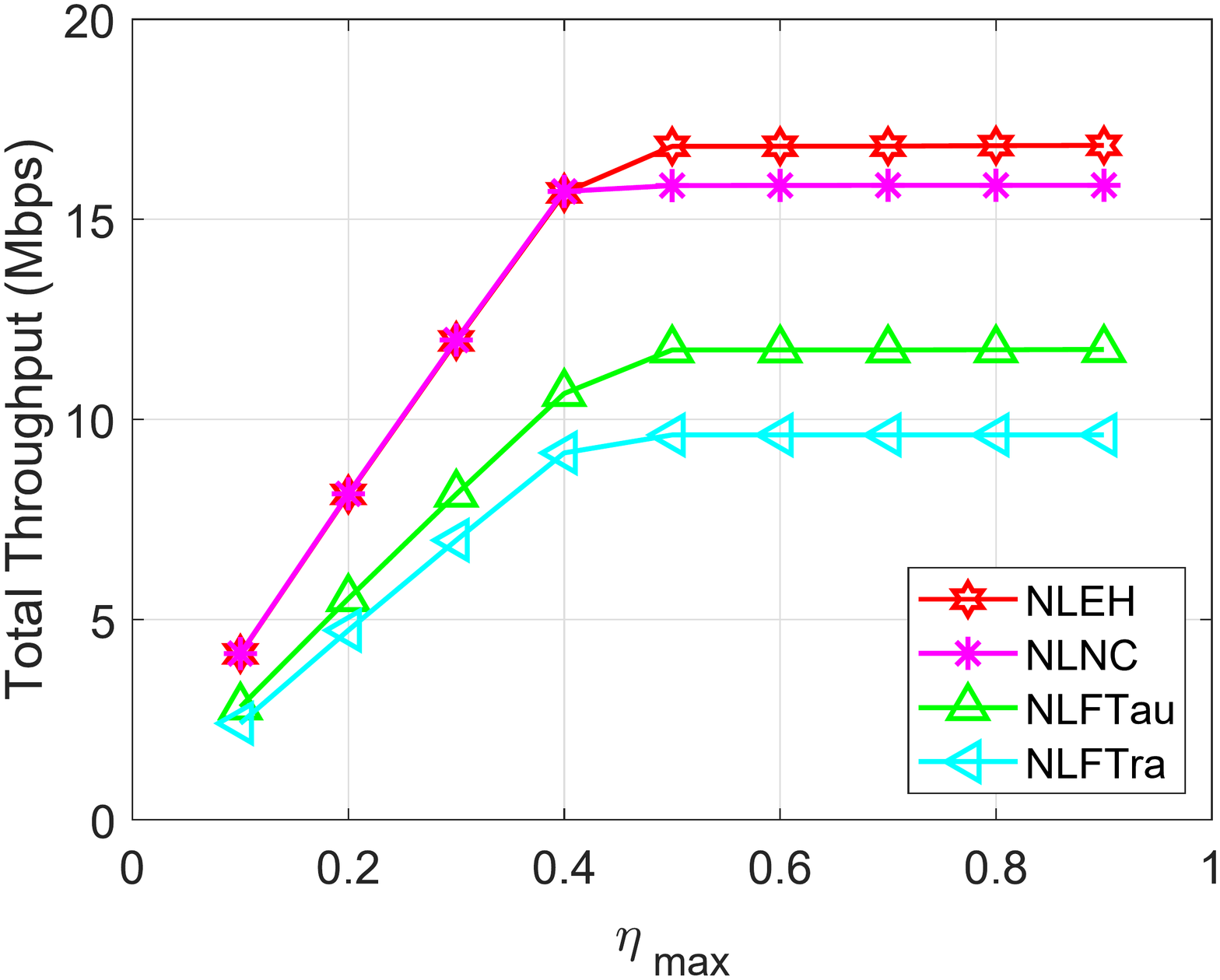}}
	\caption{Total throughput versus $\eta_{\max}$ for the two EH models.}
	\label{fig:8}
\end{figure*}

In Fig. \ref{fig:4}, we investigate the influence of total traveling time to the performance with $\delta_t=$ 0.5 second, $P_s=$ 10 W, $S$ = 2 Mbits. We observe that all the algorithms are linearly increasing with a higher number of traveling time $T$ (in seconds). This is expected since the total collected throughput at the destination is proportional to the reflection time as in \eqref{eq:P1:a}. As inferred from the results, the proposed algorithms significantly improve the total throughput (bps) as compared with the benchmarks. Specifically, in the case of NLEH model at $T=250$ seconds, the NLEH algorithm can support a throughput up to 22.2865 Mbps while the NLNC, NLFTau, NLFTra schemes respectively impose 21.2865, 17.6326, and 12.4695. Moreover, it is also observed that the LEH outperforms the NLEH methods. This is due to the fact that the harvested energy of LEH model is linearly increasing with the input power while it converges to a saturation value in NLEH model. Thus, in the LEH model, the UB can allocate more time for backscattering (i.e., $\tau_n$) as compared with NLEH model which illustrates in \eqref{eq:9a} and \eqref{eq:9b}.

As observed from Fig. \ref{fig:4}, the performance gap between the proposed LEH/NLEH and LNC/NLNC are fixed at a constant value, i.e., $\sigma S$, which does not show the advantages of the proposed schemes with caching capability. This motivates us to plot Fig. \ref{fig:5} which studies the influences of the demanded data $S$ (in bits) to the total throughput, with $T=$ 100 seconds, $P_s=$ 8 W, $V_{\max}=10$ m/s, $H=8$ m. We see that the performance of LNC/NLNC is unaltered by increasing the demanded data $S$. This can be explained that since LNC/NLNC does not cache a part of requested file for data transmission to destination, we then have $\sigma S=0$. Therefore, increasing $S$ value does not impact on LNC/NLNC scheme. This leads to our proposed algorithms, i.e., LEH and NLEH, perform much better than benchmark ones. Particularly, the performance of NLEH is enhanced to a saturation value by growing $S$. This is because the total throughput is restricted by other resources, i.e., transmit power $P_s$, traveling time $T$, reflection time $\tau_n$.

In Fig. \ref{fig:6}, the total throughput is presented as a function of the transmit power of the source $P_s$, where $T=$ 100 seconds, $V_{\max}=10$ m/s, $H=8$ m, $S=1$ Mbits. It is observed that the source transmit power has dramatic impact on the total throughput obtained at the destination. Specifically, the obtained throughput of LEH scheme increases from 2.656 to 11.495 Mbits corresponding to $P_s$ value equals to 2 and 10 W, respectively. Due to the fact that, the total collected throughput depends on the transmit power $P_s$ as shown in Eq. \eqref{eq:P1:a}. Moreover, the more the transmit power is assigned, the higher the harvested energy is achieved. Therefore, the UB can have more reflection time, i.e., higher value of $\tau_n$, which satisfies the energy constraints as in Eqs. \eqref{eq:9a} and \eqref{eq:9b}. Consequently, the system throughput is improved.

Fig. \ref{fig:7} depicts the plot of the total throughput as a function of the caching gain coefficient $\sigma$, with $T=$ 100 seconds, $V_{\max}=10$ m/s, $P_s$ = 5 W, $H=8$ m, $S=2$ Mbits. It is shown from the results that the proposed algorithms significantly enhance the total throughput compared with the references for all values of $\sigma$. Specifically, at $\sigma=0.4$, the LEH and LNC can convey 6.33 and 5.53 Mbps, respectively. Whereas the benchmark LFTau and LFTra respectively impose 3.52 and 3.68 Mbps. In particular, the performance of LNC/NLNC does not depends on caching gain coefficient $\sigma$ as explained in Fig. \ref{fig:5}.

Fig. \ref{fig:8} presents an evaluation of the total throughput versus the maximum value of backscatter coefficient $\eta_{\max}$,  with $T=$ 100 seconds, $V_{\max}=10$ m/s, $P_s$ = 20 W, $H=8$ m, $S=2$ Mbits. It is observed that the LNC/NLNC obtains the same performance as compared to LEH/NLEH scheme when the $\eta_{\max}$ value is small, i.e., $\eta_{\max} \le 0.4$. Otherwise, the proposed LEH/NLEH can obtain much better throughput than other references when $\eta_{\max}\ge 0.5$. Moreover, all methods increase to a saturation value with a higher value of $\eta_{\max}$. It is because the total throughput not only depends on $\eta_{\max}$ but also on $P_s$, $\tau_n$, and $T$.

%\vspace{-0.5cm}
\section{ Conclusion}
\label{Sec:Con}
We have investigated the cache-assisted wireless powered UAV-enabled backscatter communciations with linear and non-linear EH models. Specifically, we maximized the total throughput via jointly optimizing DTS ratio, backscatter coefficient, and trajectory. The formulated problem was a form of MINLP which is troublesome to solve. Thus, we proposed efficient alternating algorithm based on BCD method and SCA technique to solve it. Particularly, the optimal DTS ratio (or optimal backscatter coefficient) for a given backscatter coefficient (or DTS ratio) and trajectory was derived in closed-form expression which significantly reduced the complexity of proposed solutions. We illustrated via simulation
results that the proposed methods outperformed reference schemes in term of total throughput. Particularly, when the backscatter coefficient and demanded data of the destination was small we should operate in LNC/NLNC for a simple implementation.
%
%\section{Questions}
%Do we need the direct link models in this paper?
%
%We can consider a simpler system model when the UB only serve as a backscatter to reflect the data from Source to Destination. 
%
%As the same with JSAC paper of Sumit, we aim at maximizing the serving information
%data between the selected relay and the destination.
%
%As in \cite{farajzadeh}, we can consider the case of frequency division duplexing
%(FDD) whereas the UAV employs two
%separate antennas for transmission and reception operating at different frequency bands to avoid self-interference. Maybe we can  not operate at FDD since the UAV is a backscatter, it only can operate at FDD if UAV is a relay node.
%
%If we apply the UAV as both Relay and backscatter,  the UAV only use 1 antenna for both function, that means 1 antenna in UAV transmit both backscatter information and relaying at the same time?
%
%What is the meaning of reflection coefficient and dynamic time splitting?

%We can consider Full Duplex at the UAV whereas the hardware can eliminate the self-interference to the noise power \cite{liu2014enabling}.

%%%%%%%%%%%%%%%%%%%%%%%%%%%%%%%%%%%%%%%%%%%%%%%%%%%%%%%%%%%%%%%%%%%%%%
%\section{Acknowledgement}
%\label{ACK}
%This research is supported by the Luxembourg National Research Fund under project FNR CORE ProCAST, grant C17/IS/11691338 and FNR CORE 5G-Sky, grant C19/IS/13713801.

\section*{Appendix~A: Proof of Lemma~\ref{lemma:1}}
\label{Appendix:A}
\renewcommand{\theequation}{A.\arabic{equation}}
\setcounter{equation}{0}

\underline{\textit{Proof for \eqref{eq:Lemma1_1}}}: 
Firstly, we consider a function $f(x)=\mathbb{E}_{X}[\log_2(1+x)]$, $x > 0$. Based on Jensen's inequality for concave function $\log_2(1+x)$, we have 
\begin{align}
\label{eq:A1}
f(x) \le \log_2\big(1+ \mathbb{E}_X[x]\big).
\end{align}

Let us denote $x \triangleq P_s |h_{ud}^{n}|^2$, it yields
\begin{align}
\label{eq:A2}
\mathbb{E}_X[x] &= \mathbb{E} \big[P_s \psi_{su} |{\tilde h}_{su}^{n}|^2 \big] = \frac{P_s \omega_0}{\big({H^2} + {{\left\| {{\bq}_n - {\bw}_{s}} \right\|}^2}\big)^{\alpha/2} }.
\end{align}

By substituting \eqref{eq:A2} into \eqref{eq:A1}, we obtain the equation expression \eqref{eq:Lemma1_1}.

\underline{\textit{Proof for \eqref{eq:Lemma1_2}}}: Secondly, we consider a function $f(x,y)=\mathbb{E}_{X,Y}[\log_2(1+xy)]$, $x > 0$, $y > 0$, whereas $x$ and $y$ are two independent random variables. Based on Jensen's inequality for concave function $\log_2(1+xy)$ with respect to (w.r.t.) $y$, we have
\begin{align}
\label{eq:A3}
f(x,y) \le \mathbb{E}_{X}\big[\log_2\big(1+x \mathbb{E}_Y[y]\big)\big] \triangleq \hat{f}(x,y).
\end{align}

Then, by applying Jensen's inequality for convex function $\log_2(1+e^{\ln x})$ with respect to (w.r.t.) $x$, we have 
\begin{align}
\label{eq:A4}
\hat{f}(x,y) = \mathbb{E}_{X}\big[\log_2\big(1+e^{\ln x} \mathbb{E}_Y[y]\big)\big] \ge \log_2\big(1+e^{\mathbb{E}_X[\ln x]} \mathbb{E}_Y[y]\big) \triangleq {\tilde {f}(x,y) }.
\end{align}

From \eqref{eq:A3} and \eqref{eq:A4}, we see that $\tilde {f}(x,y)$ can serve as an approximation function of $f(x,y)$ but it is not a lower bound or a upper bound of $f(x,y)$. 

Let us denote $x \triangleq \frac{\eta_u^{n} P_s |h_{su}^n|^2 }{\sigma_d^2}$, $y \triangleq |h_{ud}^{n}|^2$, and apply \cite[Eq. 4.331.1]{gradshteyn2014}, we then have
\begin{align}
\label{eq:A5}
\mathbb{E}_X[\ln x] &= \int_{0}^{+\infty} \lambda_{su} e^{-\lambda_{su}x} \ln{x} dx  = - (\ln \lambda_{su} + E) = \ln \frac{\eta_u^{n} P_s  \omega_0 (d_{su}^n)^{-\alpha}}{\sigma_d^2} -E, \\
\label{eq:A6}
\mathbb{E}_Y[y] &= \mathbb{E} \big[\psi_{ud} |{\tilde h}_{ud}^{n}|^2 \big] = \frac{\omega_0}{\big({H^2} + {{\left\| {{\bq}_n - {\bw}_{d}} \right\|}^2}\big)^{\alpha/2} },
\end{align}
where $E$ is the Euler–Mascheroni constant, i.e., $E=0.5772156649$ as in \cite[Eq. 8.367.1]{gradshteyn2014}; $\mathbb{E}[x]=\frac{\eta_u^{n} P_s \omega_0 (d_{su}^n)^{-\alpha} }{\sigma_d^2}$, and $\lambda_{su}=\big(\mathbb{E}[x]\big)^{-1}=\big(\frac{\eta_u^{n} P_s \omega_0 (d_{su}^n)^{-\alpha} }{\sigma_d^2}\big)^{-1}$. Substituting \eqref{eq:A5} and \eqref{eq:A6} into \eqref{eq:A4}, we obtain \eqref{eq:Lemma1_2}. Thus, the Lemma \ref{lemma:1} is proof.

\section*{Appendix~B: Proof of Theorem~\ref{theorem:1}}
\label{Appendix:B}
\renewcommand{\theequation}{B.\arabic{equation}}
\setcounter{equation}{0}
It is easy to verify that if $\lambda_4 \ne 0$, thus $J(\boldsymbol{\tau})=0$ implying that $\tau_n=1$ which is not a feasible solution. Thus, we conclude that $\lambda_4=0$. 

In order to obtain the feasible solution, we evaluate all the cases as follows:

\underline{\textit{Case I}}: $\lambda_1=0$ $\implies$ $G({\boldsymbol{\tau}}) \ne 0$, $\lambda_2=0$ $\implies$ $H({\boldsymbol{\tau}}) \ne 0$, $\lambda_3=0$ $\implies$ $I({\boldsymbol{\tau}}) \ne 0$.

From \eqref{eq:B7}, we have $\sum\limits_{n \in {\cal N}}  \delta_t \bar{R}_d^{n}=0$ which is unreasonable. Thus, this case can not occur.

\underline{\textit{Case II}}: $\lambda_1=0$ $\implies$ $G({\boldsymbol{\tau}}) \ne 0$, $\lambda_2 \ne 0$ $\implies$ $H({\boldsymbol{\tau}}) = 0$, $\lambda_3=0$ $\implies$ $I({\boldsymbol{\tau}}) \ne 0$.

From \eqref{eq:B7}, we find that $\lambda_2=-1$ which is unreasonable. Thus, this case can not occur.

\underline{\textit{Case III}}: $\lambda_1\ne 0$ $\implies$ $G({\boldsymbol{\tau}}) = 0$, $\lambda_2 = 0$ $\implies$ $H({\boldsymbol{\tau}}) \ne 0$, $\lambda_3=0$ $\implies$ $I({\boldsymbol{\tau}}) \ne 0$.

From \eqref{eq:B7}, we have $\lambda_1=\frac{\sum\limits_{n \in {\cal N}}\delta_t \bar{R}^n_d }{\sum\limits_{n \in {\cal N}}\delta_t \bar{R}^n_d - \sum\limits_{n \in {\cal N}}\delta_t \bar{R}^n_u }$. If $\bar{R}^n_u = \bar{R}^n_d$, then we obtain $\lambda_1 = +\infty$. If $\bar{R}^n_u > \bar{R}^n_d$, then we obtain $\lambda_1 < 0 $. All of these scenarios is unreasonable. If $\bar{R}^n_u < \bar{R}^n_d$, then we obtain $\lambda_1>1$. Furthermore, from $G({\boldsymbol{\tau}}) = 0$, we have 
\begin{align}
\label{eq_B11}
\tau_n^\star \triangleq \frac{\sigma S}{N \delta_t (\bar{R}^n_d - \bar{R}^n_u)}.
\end{align}

Based on \eqref{eq:B11}, the optimal solution $\{\tau_n^\star\}$ can be obtained iff $\bar{R}^n_u < \bar{R}^n_d, \forall n \in {\cal N}$. 

\underline{\textit{Case IV}}: $\lambda_1 = 0$ $\implies$ $G({\boldsymbol{\tau}}) \ne 0$, $\lambda_2 \ne 0$ $\implies$ $H({\boldsymbol{\tau}}) = 0$, $\lambda_3\ne0$ $\implies$ $I({\boldsymbol{\tau}}) = 0$.

From $H({\boldsymbol{\tau}}) = 0$, we have $\tau_n = \frac{S}{N \delta_t \bar{R}^n_d}$. From $I({\boldsymbol{\tau}}) = 0$, we obtain $\tau_n = \frac{\chi_1 }{ \chi_1 +  \delta_t P_c}$. It can be seen that there exists two different optimal values of $\tau$ which is contradictory. Hence, this case is not occur.

\underline{\textit{Case V}}: $\lambda_1 \ne 0$ $\implies$ $G({\boldsymbol{\tau}}) = 0$, $\lambda_2 = 0$ $\implies$ $H({\boldsymbol{\tau}}) \ne 0$, $\lambda_3\ne 0$ $\implies$ $I({\boldsymbol{\tau}}) = 0$.

\underline{\textit{Case VI}}: $\lambda_1 \ne 0$ $\implies$ $G({\boldsymbol{\tau}}) = 0$, $\lambda_2 \ne 0$ $\implies$ $H({\boldsymbol{\tau}}) = 0$, $\lambda_3 = 0$ $\implies$ $I({\boldsymbol{\tau}}) \ne 0$.

Similar to case IV, we also obtain two different values of $\tau$ in case V and VI which is conflict. Thus, these cases are not occur.

\underline{\textit{Case VII}}: $\lambda_1 \ne 0$ $\implies$ $G({\boldsymbol{\tau}}) = 0$, $\lambda_2 \ne 0$ $\implies$ $H({\boldsymbol{\tau}}) = 0$, $\lambda_3 \ne 0$ $\implies$ $I({\boldsymbol{\tau}}) = 0$.

In this special scenario, we obtain up to three different values of $\tau$ which is unreasonable. Thus, this case is not occur.

\underline{\textit{Case VIII}}: $\lambda_1 = 0$ $\implies$ $G({\boldsymbol{\tau}}) \ne 0$, $\lambda_2 = 0$ $\implies$ $H({\boldsymbol{\tau}}) \ne 0$, $\lambda_3 \ne 0$ $\implies$ $I({\boldsymbol{\tau}}) = 0$.

From \eqref{eq:B7}, we have
\begin{align}
\label{eq:B12}
\sum\limits_{n \in {\cal N}}  \delta_t \bar{R}_d^{n} - \lambda_3 \Big(\chi_1 + \sum\limits_{n \in {\cal N}} \delta_t P_c\Big) = 0 \iff \lambda_3 = \frac{\sum\limits_{n \in {\cal N}}  \delta_t \bar{R}_d^{n} }{\chi_1 + \sum\limits_{n \in {\cal N}} \delta_t P_c}.
\end{align}

Moreover, from $I(\boldsymbol{\tau})=0$, we have
\begin{align}
\label{eq_B2}
 \tau_n^\star = \frac{\chi_1} { \chi_1 +  \delta_t P_c}.
\end{align}

From \eqref{eq_B11}, \eqref{eq_B2}, and constraint \eqref{eq:P1t:e}, we can obtain \eqref{eq:23} which completes the proof of Theorem \ref{theorem:1}.

\vspace{-0.1cm}
\section*{Appendix~C: Proof of Theorem~\ref{theorem:2}}
\label{Appendix:C}
\renewcommand{\theequation}{C.\arabic{equation}}
\setcounter{equation}{0}

We analyze all the possible cases to obtain the feasible solution. The analysis is listing as follows:

\underline{\textit{Case I}}: $\lambda_5=0$ $\implies$ $G_1({\boldsymbol{\tau}}) \ne 0$, $\lambda_6=0$ $\implies$ $H_1({\boldsymbol{\tau}}) \ne 0$.

\vspace{0.2cm}
From \eqref{eq:42}, we have $ \frac{B \sum\limits_{n \in {\cal N}} \tau_{n} \delta_t \varphi_1 }{\ln 2 \Big(1+  \varphi_1 \eta_u^n \Big)}=0$ which implies that $P_s=0$ which is unreasonable. Hence, this case can not occur.

\underline{\textit{Case II}}: $\lambda_5=0$ $\implies$ $G_1({\boldsymbol{\tau}}) \ne 0$, $\lambda_6 \ne 0$ $\implies$ $H_1({\boldsymbol{\tau}}) = 0$.

It can find from \eqref{eq:42} that $\lambda_6 = -1$ which is not a feasible value. Thus, this case can not occur.

\underline{\textit{Case III}}: $\lambda_5\ne 0$ $\implies$ $G_1({\boldsymbol{\tau}}) = 0$, $\lambda_6 \ne 0$ $\implies$ $H_1({\boldsymbol{\tau}}) = 0$.

Combines conditions $G_1({\boldsymbol{\tau}}) = 0$ and $H_1({\boldsymbol{\tau}}) = 0$ with \eqref{eq:40} and \eqref{eq:41}, we can obtain two different optimal values of $\eta_u^n$ which is contradictory. Thus, this case can no occur.

\underline{\textit{Case IV}}: $\lambda_5\ne 0$ $\implies$ $G_1({\boldsymbol{\tau}}) = 0$, $\lambda_6 = 0$ $\implies$ $H_1({\boldsymbol{\tau}}) \ne 0$.

%From Eq. \eqref{eq:42}, we can obtain
%\begin{align}
%\label{eq:C1}
%\lambda_5 = \frac{\sum\limits_{n \in {\cal N}} \tau_{n} \delta_t \varphi_1 }{\varphi_5 \ln 2 \Big(1+  \varphi_1 \eta_u^n \Big)}.
%\end{align}

Then, the \eqref{eq:45} can be obtained based on the condition $G_1({\boldsymbol{\tau}}) = 0$ and constraint \eqref{eq:P1e:d} which finishes the proof of Theorem \ref{theorem:2}.

\balance
\bibliographystyle{IEEEtran}
\bibliography{IEEEfull}
\end{document}